\newcommand{\Cut}{\Delta}
\newcommand{\cut}{\Cut}
\newcommand{\projS}{P_S} 
\newcommand{\projSbar}{P_{\overline{S}}} 
\newcommand{\HS}{H_{S}} 
\newcommand{\gsproj}{P_\lambda} 
\newcommand{\gse}{\lambda} 
\newcommand{\gsprojS}{P_\mu} 
\newcommand{\gseS}{\mu} 
\newcommand{\gsevalSbar}{\overline{\mu}} 
\newcommand{\h}{h} 
\newcommand{\gapS}{\Gamma_S} 
\newcommand{\gsprojSperp}{P_\mu^\perp}
\newcommand{\gsrS}{\lambda_S} 
\newcommand{\id}{I} 
\renewcommand{\vert}{|}
\renewcommand\paragraph{\@startsection{paragraph}{4}{\z@}%
                                    {1ex \@plus1ex \@minus.2ex}%
                                    {-1em}%
                                    {\normalfont\normalsize\bfseries}}
\begin{document}

\title{On the stability of solutions to Schr\"{o}dinger's equation short of the adiabatic limit}
\author[1,2]{Jacob Bringewatt}
\author[1]{T. C. Mooney}
\author[3]{Michael Jarret}
\affil[1]{Joint Center for Quantum Information and Computer Science and Joint Quantum Institute, NIST/University of Maryland College Park, Maryland 20742, USA}
\affil[2]{Department of Physics, Harvard University, Cambridge, MA 02138, USA}
\affil[3]{Department of Mathematical Sciences, Department of Computer Science, and Quantum Science and Engineering Center George Mason University, Fairfax, VA, 22030, USA}

\maketitle

\begin{abstract}
    We prove an adiabatic theorem that applies at timescales short of the typical adiabatic limit. Our proof analyzes the stability of solutions to Schrodinger's equation under perturbation. We directly characterize cross-subspace effects of perturbation, which are typically significantly less than suggested by the perturbation's operator norm. This stability has numerous consequences: we can (1) find timescales where the solution of Schrodinger's equation converges to the ground state of a subspace, (2) lower bound the convergence to the global ground state by demonstrating convergence to some other known quantum state, (3) guarantee faster convergence than the standard adiabatic theorem when the ground state of the perturbed Hamiltonian is close to that of the unperturbed Hamiltonian, and (4) bound leakage effects in terms of the global spectral gap when the Hamiltonian is ``stoquastic'' (a $Z$-matrix). Our results can help explain quantum annealing protocols that exhibit faster convergence than is guaranteed by a standard adiabatic theorem. Our upper and lower bounds demonstrate that at timescales short of the adiabatic limit, subspace dynamics can dominate over global dynamics. Thus, we see that, when our results apply, convergence to particular global target states can be understood as the result of local dynamics. 
\end{abstract}

\section{Introduction}
 Let $A(t) \in \mathbb{C}^{N \times N}$ with $t \in [0,T]$ be a one-parameter family of bounded, Hermitian operators acting on the Hilbert space $\mathcal{H}=\mathbb{C}^N$ endowed with the standard inner product from quantum theory which is
anti-linear in the first argument and linear in the second.
Assume that $A(t)$ has a unique smallest eigenvalue for all $t$ and let the eigenvalues of $A(t)$ be denoted as $\lambda(t)=\lambda_0(t) < \lambda_1(t) \leq \lambda_2(t) \dots \leq \lambda_{N-1}(t)$ with corresponding eigenvectors $\vec\lambda(t),\vec\lambda_1(t),\dots,\vec\lambda_{N-1}(t)$.
A quantum adiabatic theorem is any theorem that bounds the convergence of solutions of Schr\"odinger initial value problems 
\begin{equation}\label{eqn:Schrodinger}
    \begin{cases}
        i \frac{d \vec\psi}{dt} = A(t) \vec\psi(t) \\
        \vec\psi(0) = \vec\lambda_j(0)
    \end{cases}
\end{equation}
to $\vec\lambda_j(t)$ in $[0,T]$ as $T\to \infty$ (Most frequently, $j$ is taken to be $0$, however these theorems can apply to any choice of $j \in \mathbb{Z}_{N}$). The operator $A$ is the Hamiltonian.\footnote{One might object to the fact we are calling this operator $A$ instead of the typical (and natural) choice $H$. However, $H$ will soon be used to refer to a Hamiltonian with a special block-diagonal structure and, consequently, we hold off on using it.} Adiabatic theorems, originally introduced by Einstein \cite{einstein1914verh} and Ehrenfest \cite{Ehrenfest}, were not provided a general, rigorous treatment until Kato in 1950 \cite{Kato}. 
The folklore version of the adiabatic ``theorem'' states (see e.g.~\cite{RolandPRA2002}):
\begin{sthm}\label{sthm:folk_adiabatic}
For the Schr\"odinger initial value problem considered in \cref{eqn:Schrodinger} with $j=0,$ 
\[
{\abs*{\langle \vec\psi(t), \vec\lambda_0(t)\rangle} \geq 1 - \frac{1}{2}\left( \max_{t\in[0,T]}\frac{\norm{\frac{dA}{dt}}}{\left(\lambda_1(t)-\lambda_0(t)\right)^2}\right)^2.}
\]
\end{sthm}
In other words, if $A$ varies slowly enough with respect to $t$, then the solution to Schr\"{o}dinger's equation initialized at $t=0$ to the lowest eigenvector $\vec\lambda_0(0)$ of $A(0)$ remains instantaneously close to the eigenvector $\vec\lambda_0(t)$ at all times $t \in [0,T]$. Hereafter, we will refer to the difference $\gamma(t):=\lambda_1(t)-\lambda_0(t)$ as the \textbf{spectral gap} or \textbf{gap} of $A$.

This folklore theorem and its original rigorous formulations have been expanded upon in two major directions that are relevant for contextualizing this work. First, a variety of adiabatic theorems under different conditions have been proven in the mathematical physics literature. Second, these theorems have been applied to great effect in the context of quantum computing to an analog model of computation known as quantum annealing~\cite{annealingPRE1998, farhi2000quantum,Hauke_2020, AlbashRevModPhys2018}.  Our work straddles the boundaries of these two directions, taking qualitative inspiration from a collection of more nuanced adiabatic theorems to add a new adiabatic theorem to the pantheon that we expect to be of particular use for rigorously understanding phenomena in quantum annealing beyond the standard adiabatic limit. 

The simplest extensions of adiabatic theorems beyond the early results of Kato rely on additional smoothness assumptions on the Hamiltonian to prove tighter and more rigorous bounds that also depend on higher derivatives of $A$ and higher powers of the gap (see e.g. Refs.~\cite{teufel2003adiabatic,jansen2007bounds} and Theorem~\ref{thm:JRS}, below). 
However, there also exist adiabatic theorems that relax the gap assumption, and work with Hamiltonians which have continuous spectra~\cite{bornemann1998homogenization,avron1999adiabatic,salem2007adiabatic,elgart2011adiabatic}.  
Such theorems rely on two key conceptual points: (1) while a state belonging to a continuous spectrum of eigenvalues is not stable to perturbations and, generically, becomes a ``resonance'' that leaks away, an adiabatic theorem can still apply to such states for timescales short of the resonance lifetime; and, further, (2) such results hold even for initial states that are ``nearly'' spectral projections of the Hamiltonian (i.e. only ``nearly'' eigenstates). In the context of these works such ``near'' eigenstates are typically thought of as corresponding to spatially localized wavefunctions. 
In a somewhat related vein, Refs.~\cite{panati2002space,panti2003space} also consider the problem of  applying an adiabatic theorem to a local region of physical space with small corrections to the adiabatic theorem bounded by the effects of tunneling out of this region. Such results have recently been extended to the many-body setting~\cite{teufel2020non,yin2023pretherm}.
Also worth noting is Ref.~\cite{mozgunov2023quantum} which derives a rigorous adiabatic theorem for unbounded Hamiltonians based on applying a cutoff between a low and high energy subspace.

In this work, making use of some qualitatively similar features to these earlier papers, we prove an adiabatic theorem that applies to a mathematical, rather than a physical, subspace. While physical subspaces can be viewed as a special case of our setting our results apply to a more general class of Hamiltonians. 
Analogous to the case of adiabatic theorems without a gap, our result will apply for timescales short of some other relevant timescale in the system. Thus, our result can be viewed as a sort of ``intermediate timescale adiabatic theorem''. Similar to the spatial adiabatic theorems, the long timescale will be determined by the accumulation of error due to leakage of probability amplitude from one subspace to another.  Furthermore, like the adiabatic theorems without a gap, our results will also depend on showing that our associated adiabatic theorem holds for states ``near'' a relevant eigenstate. Like Ref.~\cite{mozgunov2023quantum}, our result depends on separating dynamics into high and low energy subspaces. But our results also control the interactions between states within the low energy subspace, yielding a distinct interpretation and set of applications.

In particular, one of our goals is to provide a rigorous understanding of quantum annealing beyond the adiabatic regime. In quantum annealing, one first initializes the system (typically, a collection of $n$ qubits) in the easy-to-prepare ground state of a Hamiltonian $H_0$, and then performs a time-dependent evolution to some other Hamiltonian $H_1$, whose ground state encodes the solution to a computational problem of interest. The goal of such a procedure is to produce a final state with strong support on the ground state of $H_1$. Typically, this ground state is designed to encode the solution to a computational problem, such as a combinatorical optimization problem.
In the adiabatic regime, success is guaranteed with high probability, provided the Hamiltonian of the system is changed ``slowly enough,'' where slowly enough is rigorously defined via an adiabatic theorem. However, adiabatic evolution is only a sufficient condition for a successful quantum annealing protocol. Consequently, there has been significant work on designing successful quantum annealing protocols with a faster time-to-solution---this includes ideas such as the quantum approximate optimization algorithm (QAOA)~\cite{farhi2014quantum}, counterdiabatic driving~\cite{demirplak2003adiabatic,berry2009transitionless,odelin2019shortcuts}, and optimal control~\cite{odelin2019shortcuts}. 

Unfortunately, most of this work has been heuristic or numerical in nature. One notable exception is Ref.~\cite{garcia2022lower}, which provides rigorous lower bounds on quantum annealing times via quantum speed limits; however, while these bounds were shown to yield an achievable asymptotic scaling for a handful of toy models with large amounts of symmetry, the generic nature of quantum speed limits means that these bounds are unlikely to help characterize the timescales of more realistic quantum annealing algorithms. 
Our theorem provides a powerful alternative tool for rigorously analyzing non-adiabatic quantum annealing. We expect that in most generic instances where faster-than-adiabatic annealing is successful (e.g. as observed numerically and experimentally in Refs.~\cite{ebadi2022quantum,cain2023quantum,braida2023anti}) corresponds to regimes where the dynamics can be shown to decompose into local subspaces where our localized adiabatic theorem applies. 

\subsection{Our Contributions}

In order to provide a semi-rigorous intuition for the substance of our result---and before delving into the technical details---we provide a slightly hand-wavy ``theorem'' that captures its key elements. To do so, we require a bare minimum of notation. Let $S\subset \mathcal{H}$ be a linear subspace and let $\overline{S}:=S^\perp$ be its orthogonal complement. Given any subspace $Q$, let $P_Q$ denote its associated orthogonal projection. 
Letting $s:=t/T\in[0,1]$, we consider a one-parameter family of Hamiltonians $A(s)=H(s)+\Delta(s)$, where $H(s)$ respects the decomposition $\mathcal{H}=S\oplus \overline{S}$,  { $P_S \vec\lambda(s) \neq 0$} for all $s$, and $\Delta(s)$ is a block-antidiagonal perturbation. That is, for all $s$, $[H(s),P_S]=[H(s),P_{\overline{S}}]=0$, $\Delta(s)P_S=P_{\overline{S}}\Delta(s)$, and $\Delta(s)P_{\overline{S}}=P_{{S}}\Delta(s)$. For future convenience, we also define
 \begin{equation}\label{eq:HSHSbar}
 H_S(s):=P_SH(s)P_S \qquad \text{and} \qquad
 H_{\overline{S}}(s):=P_{\overline{S}}H(s)P_{\overline{S}}.
 \end{equation}

Subject to some minor additional assumptions on $H$ and $S$, we prove that something like the standard adiabatic theorem applies to the state $\vec\psi(t)$ at timescales lower bounded by the local spectral gap (of $H_S$) and upper bounded by a weighted version of the Cheeger ratio $h$~\cite{chung2000weighted,jarret2018hamiltonian} (see \cref{def:cheeger} below), which can be understood to quantify the rate of transitions of the state $\vec\psi(t)$ out of $S$ under evolution by the Schr\"odinger equation in \cref{eqn:Schrodinger}. That is, we prove the following ``theorem''  (see \cref{s:results} for the fully rigorous statement). 
\begin{sthm}[Main result]\label{sthm:mainresult}
Let $s:=t/T\in[0,1]$. Let $U(s=1)$ be the unitary evolution generated by the Hamiltonian $H(s)+\Delta(s)$, where $H(s)=H_S+H_{\overline{S}}$ is block diagonal and $\Delta(s)$ is a block anti-diagonal perturbation. Let $V_{\mathrm{ad}}(1)$ be the exact adiabatic unitary evolution generated by $H(s)$. Then,  there exists some $z>0$ such that when acting on the initial ground space ${P_{\mu}(0)}$ of the block $S$
\[
\norm{(U(1)-V_{\mathrm{ad}}(1))
{P_{\mu}(0)}
}
\leq \mathcal{O}\big(\max_{s\in[0,1]}\sqrt{h^{z}T}\big)+\mathcal{O}(\text{usual adiabatic error with Hamiltonian } H_S).
\]
\end{sthm}

This result bounds the error between the ``true'' unitary evolution $U$ generated by the Hamiltonian $H(s)+\Delta(s)$ and adiabatic unitary evolution $V_{\mathrm{ad}}$ generated by the block diagonal Hamiltonian $H(s)$. Note that the bound consists of two pieces: the first term scales with the total time $\sim\sqrt{T}$ and can be intuitively understood as the error that arises due to leakage between the subspaces $S,$ and $\overline{S}$.~\footnote{{{In the quantum annealing literature, this leakage is sometimes referred to as ``tunneling.'' However, this is not \emph{bona fide} tunneling in physical space as $S$ and $\overline{S}$ are subspaces of Hilbert space, not regions of physical space. Thus, we use the term ``leakage.''}}} The rate of this leakage is determined by the Cheeger ratio $h$. The second term scales like $\sim 1/T$ and can be understood as something like the usual adiabatic error applied to the subspace $S$. Importantly, the standard adiabatic error is governed by the eigenvalue gap between the smallest and next-smallest eigenvalues of the relevant Hamiltonian. Similarly, this error is determined by the ``local'' eigenvalue gap of $H_S$, as opposed to the global eigenvalue gap of $H(s)+\Delta(s)$, as in the standard adiabatic theorem. This local gap is generically larger than the global one.

Given the interplay between an error term that scales like $\sqrt{T}$ and one that scales like $1/T$, we observe how our result can be interpreted as an ``intermediate timescale adiabatic theorem,'' where the timescale of applicability is governed by the Cheeger ratio and the local gap.

Our results can be utilized in three primary ways: (1) To prove that, under perturbation, there exist timescales where the solution of Schrodinger's equation converges to the lowest eigenstate of a block, (2) extending this, to provide a lower bound on convergence to the ground state, by demonstrating convergence to some other (known) quantum state, and (3) when the ground state of a perturbed Hamiltonian $H + \Delta$ is close to the ground state of an unperturbed, block-diagonal $H$, to provide faster convergence than that predicted by the standard adiabatic theorem.

More specifically, we emphasize again that we expect these mathematical tools to be particularly applicable to the analysis of quantum annealing beyond the adiabatic regime.
In these cases, we are predominantly interested in well-defined sequences of sets Hamiltonians of ever-increasing Hilbert space dimension $N$, where we wish to understand the rate of convergence with respect to $N$. Although it is possible to analyze the asymptotic behavior in terms of $N$ without reference to spectral information, the difficulty and generality of the computational problems that these sequences correspond to usually leads to applying spectral methods.

The rest of the paper is organized as follows: In \cref{s:mathematicalprelim} we provide some initial definitions and prove a simple proposition relating the spectra of the various Hamiltonians and the Cheeger ratio $h$ that will be used extensively throughout. In \cref{s:timeindependentcase} we discuss a simple version of our theorem for the time independent scenario as a simplified, stage-setting example. In \cref{s:results} we present the formal version of ``Theorem''~\ref{sthm:mainresult} and in \cref{s:details} we provide the technical details that lead to its proof. In \cref{s:stoquasticH} we discuss a simplified version of our main result for the case of so-called ``stoquastic'' Hamiltonians, a sub-class of interest for adiabatic quantum computing. Finally, in \cref{s:examples} we provide a pair of toy examples of how our results can be applied and in \cref{s:discussion} we summarize our conclusions and discuss how we expect our results can be more broadly applied to problems of interest in quantum annealing.

\section{Preliminaries}\label{s:mathematicalprelim}
Here, we develop some additional notation and derive the preliminary results that are needed to prove the rigorous version of ``Theorem'' \ref{sthm:mainresult}. {{We encourage the reader to initially skim this section just enough to follow the time independent case presented in \cref{s:timeindependentcase}, which serves as a mathematically simpler precursor to the more general result.}}

To begin, we provide the formal definition of the Cheeger ratio:
\begin{definition}[Cheeger ratio]\label{def:cheeger}
    For $H, \Cut$ defined as above the \textbf{Cheeger ratio} is defined as
    \[
    \h(s):=\begin{cases}
        -\frac{\langle \vec\lambda,\Cut \projS \vec\lambda\rangle}{\min\{\langle{\vec\lambda},P_S{\vec\lambda}\rangle, \langle{\vec\lambda},\projSbar{\vec\lambda}\rangle\}}, & \text{if } \min\{\langle{\vec\lambda},P_S{\vec\lambda}\rangle, \langle{\vec\lambda},\projSbar{\vec\lambda}\rangle\} \neq 0 \\
        0, & \text{otherwise}.
    \end{cases} 
    \]
\end{definition}
Note that \cref{def:cheeger} is slightly abusive compared to the standard definition for the Cheeger ratio, where one would let $h_S$ be the quantity defined in \cref{def:cheeger} and let $h:=\min_{S} h_S$ be the Cheeger constant. Here, we do not assume such a minimization until \Cref{corr:stoq} and, consequently, drop the $S$ subscript {{to avoid cluttering the notation.}} 

Recall that we denote the spectrum of $H(s)+\Cut(s)$ as $\set{\lambda_j(s)},$ with $\lambda(s):=\lambda_0(s)<\lambda_1(s)\leq\lambda_2(s)\leq\cdots\leq\lambda_{N-1}(s).$ 
Further, denote the spectrum of $H_S(s)|_S$ by 
$\set{\mu_j(s)}$ with 
$\mu(s):=\mu_0(s)<\mu_1(s)\leq\mu_2(s)\leq\cdots$ and the spectrum of $H_{\overline{S}}(s)\vert_{\overline{S}}$ by $\set{\overline{\mu}_j(s)}$ with $\overline{\mu}(s):=\overline{\mu}_0(s)<\overline{\mu}_1(s)\leq\overline{\mu}_2(s)\leq\cdots$. 
{{We have implicitly assumed that the ground states of $H_S(s)$ and $H_{\overline{S}}(s)$ are non-degenerate. Let the eigenstates associated with these eigenvalues be denoted with vectors. For instance, the eigenstate associated with $\lambda$ is $\vec\lambda$.}}

We then define the spectral gaps, or simply gaps, of $(H(s)+\Cut(s))$, $H_S(s)$, and $H_{\overline{S}}(s)$ as
\begin{align}
    \gamma(s)&:=\lambda_1(s)-\lambda(s), \nonumber\\
    \Gamma_S(s)&:=\mu_1(s)-\mu(s), \nonumber\\
    \Gamma_{\overline{S}}(s)&:=\overline{\mu}_1(s)-\overline{\mu}(s),
\end{align}
respectively. From here on, whenever it is not an impediment to clarity, we drop the explicit $s$ dependence of operators to avoid clutter. The only quantities not considered, {{in general,}} to be $s$-dependent will be the subspaces $S$, $\overline{S}$. Given these definitions, we have the following proposition {{relating the various spectra defined above.}} Recall that we assumed that the Hamiltonian $H+\Delta$ is such that $\vec\lambda$ has support on $S$ and, consequently, the quantity $\lambda_S$ around which this proposition centers is well-defined. 
\begin{prop}\label{prop:evalfacts} 
 Let
\[
\lambda_S(s) := \frac{\inprod{\vec\gse}{\HS\vec{\gse}}}{\inprod{\vec\gse}{\projS\vec{\gse}}}.
\] 
Then, the following inequalities hold:
\begin{enumerate}
    \item \label{prop:evalfacts1} $\gse \leq \min\set{\gseS , \gsevalSbar}$,
    \item \label{prop:evalfacts2}$\gseS \leq \gsrS$, and 
    \item \label{prop:evalfacts3}$\gsrS \leq \gse + \h$.
\end{enumerate}
\end{prop}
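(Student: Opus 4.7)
My plan is to establish each of the three inequalities in turn, starting from the variational characterization of eigenvalues and then using the eigenvalue equation for $A=H+\Delta$.

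For inequality (1), I would invoke the min-max principle for $A$. Taking the ground eigenvector of $H_S|_S$, extended by zero on $\overline{S}$, gives a unit test vector $w \in S$. Because $\Delta$ is block anti-diagonal, $\langle w, \Delta w\rangle = 0$, so the Rayleigh quotient of $A$ at $w$ is exactly $\langle w, Hw\rangle = \mu$, yielding $\lambda \leq \mu$. The identical argument using the ground eigenvector of $H_{\overline{S}}|_{\overline{S}}$ gives $\lambda \leq \overline{\mu}$.

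For inequality (2), note that $H_S = P_S H P_S$ has smallest eigenvalue $\mu$ when restricted to $S$. Hence for any $v \in \mathcal{H}$ with $P_S v \neq 0$, the Rayleigh quotient at $P_S v/\|P_S v\|$ gives $\langle v, H_S v\rangle = \langle P_S v, H_S P_S v\rangle \geq \mu \|P_S v\|^2 = \mu\langle v, P_S v\rangle$. Setting $v = \vec\lambda$ and dividing by $\langle \vec\lambda, P_S \vec\lambda\rangle$ yields $\lambda_S \geq \mu$.

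Inequality (3) is the interesting step. I would take the inner product of both sides of $A\vec\lambda = \lambda\vec\lambda$ with $P_S\vec\lambda$:
\[
\lambda \langle \vec\lambda, P_S\vec\lambda\rangle = \langle P_S\vec\lambda, H\vec\lambda\rangle + \langle P_S\vec\lambda, \Delta\vec\lambda\rangle.
\]
The block-diagonality of $H$ collapses the first term on the right to $\langle \vec\lambda, H_S\vec\lambda\rangle$, while the block anti-diagonality of $\Delta$ gives $\langle P_S\vec\lambda, \Delta\vec\lambda\rangle = \langle P_S\vec\lambda, \Delta P_{\overline{S}}\vec\lambda\rangle$. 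Rearranging produces
\[
\lambda_S - \lambda = -\frac{\langle \vec\lambda, P_S\Delta\vec\lambda\rangle}{\langle \vec\lambda, P_S\vec\lambda\rangle}.
\]
Using $(\Delta P_{\overline{S}})^\ast = \Delta P_S$ and the reality of the previous display, one checks that $\langle \vec\lambda, P_S\Delta\vec\lambda\rangle = \langle \vec\lambda, \Delta P_S\vec\lambda\rangle \in \mathbb{R}$, so by \cref{def:cheeger} the numerator equals $-h\cdot\min\{\langle \vec\lambda, P_S\vec\lambda\rangle, \langle\vec\lambda, P_{\overline{S}}\vec\lambda\rangle\}$. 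A short case split on which of the two projections is smaller then yields $\lambda_S - \lambda \in \{h,\; h\cdot\langle\vec\lambda,P_{\overline{S}}\vec\lambda\rangle/\langle\vec\lambda,P_S\vec\lambda\rangle\}$, each bounded above by $h$ provided $h \geq 0$; positivity of $h$ follows from (1) and (2), since $\lambda \leq \mu \leq \lambda_S$ forces the numerator $-\langle\vec\lambda,\Delta P_S\vec\lambda\rangle$ to be non-negative.

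The main obstacle is the bookkeeping in part (3): verifying that the cross term $\langle \vec\lambda, P_S\Delta\vec\lambda\rangle$ is real and equal to $\langle\vec\lambda,\Delta P_S\vec\lambda\rangle$, and handling the degenerate case $\min\{\langle\vec\lambda,P_S\vec\lambda\rangle, \langle\vec\lambda,P_{\overline{S}}\vec\lambda\rangle\} = 0$, where $\vec\lambda$ lies entirely in one block, so $\lambda_S = \lambda$ and $h=0$ by convention, and the inequality holds trivially.
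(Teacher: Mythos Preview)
Your proposal is correct and follows essentially the same route as the paper: a variational argument for (1) and (2) (the paper cites Cauchy interlacing for (1) but also notes the variational proof as an alternative), and for (3) projecting the eigenvalue equation $(H+\Delta)\vec\lambda=\lambda\vec\lambda$ onto $S$ to obtain the identity $\lambda=\lambda_S+h\min\{1,\langle\vec\lambda,P_{\overline S}\vec\lambda\rangle/\langle\vec\lambda,P_S\vec\lambda\rangle\}$, which is exactly your case split. Your explicit checks that the cross term is real, that $h\ge 0$ follows from (1)--(2), and that the degenerate case $\min=0$ is trivial are good hygiene the paper leaves implicit.
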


\begin{proof}
    Without loss of generality, suppose that $\gseS = \min\{\gseS,\gsevalSbar\}$. \Cref{prop:evalfacts1} follows immediately from the fact that $\gseS$ is the smallest eigenvalue of $\HS \vert_S$ and Cauchy's interlacing theorem \cite[Corollary III.1.5]{Bhatia}. \Cref{prop:evalfacts2} follows from the following simple variational argument: note that
    \[
        \gse = \inf_{\vec\psi \in \mathcal{H}} \frac{\inprod{\vec\psi}{(H+\Delta)\vec\psi}}{\inprod{\vec\psi}{\vec\psi}}
        \leq \inf_{\vec\psi \in S} \frac{\inprod{\vec\psi}{(H+\Delta)\vec\psi}}{\inprod{\vec\psi}{\vec\psi}} = \gseS
        \leq \frac{\inprod{\projS\vec\gse}{(H+\Delta)\projS\vec\gse}}{\inprod{\projS\vec\gse}{\projS\vec\gse}}
        =\lambda_S.
   \]
 (This is also an alternative proof of \Cref{prop:evalfacts1}.) For \Cref{prop:evalfacts3}, the fact that $(H+\Delta) \vec\gse = \gse \vec\gse$, implies that $\gse\projS\vec{\gse}=\projS (H+\Delta)\vec{\gse}$. Thus, inserting the resolution of identity $\projS+\projSbar$, we have that
\begin{align*}
    \gse \inprod{\vec\gse}{\projS\vec{\gse}} &= \inprod{\vec{\gse}}{\projS (H+\Delta)\projS\vec{\gse}} + \inprod{\vec\gse}{\projS (H+\Delta)\projSbar\vec{\gse}}
     = \inprod{\vec{\gse}}{H_S\vec{\gse}} + \inprod{\vec\gse}{\projS \Delta\projSbar\vec{\gse}}, 
\end{align*}
which implies that
\begin{align*}
    \gse &= \gsrS + \frac{\inprod{\vec\gse}{\projS \Delta \projSbar\vec{\gse}}}{\inprod{\vec\gse}{\projS\vec{\gse}}}
    = \gsrS - \h\min\set*{\frac{\inprod{\vec\gse }{\projSbar \vec\gse}}{\inprod{\vec\gse}{\projS \vec\gse}},1},
\end{align*}
establishing \cref{prop:evalfacts3}. 
\end{proof}
It will also be convenient to define a a few additional operators. In particular, we define orthogonal projectors onto the ground spaces of $S, \overline{S}$:
\begin{align}\label{eq:pmu}
    P_\mu(s&:=P_{\mathrm{span}\set{\vec x|H_S\vec x=\mu \vec x}} &&
    P_{\overline{\mu}}(s):=P_{\mathrm{span}\set{\vec x|H_{\overline{S}}\vec x=\overline{\mu}\vec x}}.
\end{align}
 Given these, we also define
\begin{align}
    \Delta^\perp(s):={{\mathcal{M}^\perp}}\Delta{{\mathcal{M}^\perp}},
\end{align}
{{where $\mathcal{M}^\perp:=I-\mathcal{M}$ for $\mathcal{M}:=P_\mu+P_{\overline{\mu}}$. Thus, $\Delta^\perp(s)$ is the block-antidiagonal pertubation $\Delta$ projected away from the local ground spaces $P_\mu$ and $P_{\overline{\mu}}$.}}

Finally, we define a collection of time evolution operators generated by the various Hamiltonians we have defined. In particular, let $U(s)$, $U_\perp(s)$, and $V(s)$ be one-parameter families of unitary operators defined by initial value condition $U(0)=U_\perp(0)=V(0)=I$ and
\begin{align}\label{eq:unitaries}
    i\dot{U}(s)&=T(H(s)+\Cut(s))U(s), \nonumber \\
    i\dot{U}_\perp(s)&=T(H(s)+\Cut^\perp(s))U_\perp(s), \nonumber \\
    i\dot{V}(s)&=TH(s)V(s),
\end{align}
respectively. Finally, let the adiabatic unitary corresponding to $V(s),$ denoted $V_{\mathrm{ad}}(s),$ be the solution to $V_{\mathrm{ad}}(0)=I,$
\begin{align}\label{eq:Vad}
    i\dot{V}_{\mathrm{ad}}(s)=\left(TH(s)+i[\dot{P}_\mu(s), P_\mu(s)]\right)V_{\mathrm{ad}}(s),
\end{align}
where $\dot P_\mu, \dot P_{\overline{\mu}}$ denotes a derivative with respect to $s$.

\section{Motivating Scenario: The Time-Independent Case}\label{s:timeindependentcase}
Before diving into the general, time-dependent scenario outlined in ``Theorem''~\ref{sthm:mainresult}, it is helpful to consider the mathematically much simpler scenario of a static, time-independent Hamiltonian with $\dot H=\dot\Cut=0.$ By allowing us to brush the mathematical complications of time-dependence under the rug, this example lets us to better highlight the aims of our paper.

In this time-independent scenario adiabatic evolution by $H$ on $\vec\mu$ simply gives $e^{-i\mu t}\vec\mu.$ Thus, we are able to prove the following simple result bounding the difference between time evolution under $H$ and time evolution under the perturbed Hamiltonian $H+\Delta$:
\begin{theorem}\label{thm:timeindep}
    Consider a Hamiltonian $H+\Cut$ with $\dot H =\dot\Cut= 0.$ Then, for $\vec\mu,$ $h$ defined as above, 
    \[\left\Vert (e^{-i(H+\Cut)T}-e^{-i\mu T})P_\mu\right\Vert  \leq  2\sqrt{hT}.\]
\end{theorem}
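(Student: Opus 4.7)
The plan is to bypass the machinery of the main theorem and give a direct spectral-calculus argument. Because $\bigl\|(e^{-i(H+\Cut)T}-e^{-i\mu T})P_\mu\bigr\|$ equals the supremum of $\bigl\|(e^{-i(H+\Cut)T}-e^{-i\mu T})\vec x\bigr\|$ over unit vectors $\vec x\in\mathrm{Im}(P_\mu)$, I fix such an $\vec x$ and expand the squared norm. Using that $\mu I$ commutes with $H+\Cut$, this reduces to $2-2\operatorname{Re}\bigl(\vec x^\dagger e^{i(H+\Cut-\mu)T}\vec x\bigr)$. Decomposing $\vec x=\sum_j c_j\vec\lambda_j$ in the eigenbasis of $H+\Cut$ (with $\sum_j|c_j|^2=1$) turns the quantity into $\sum_j|c_j|^2\cdot 4\sin^2\!\bigl((\lambda_j-\mu)T/2\bigr)$.

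I then apply the elementary inequality $4\sin^2(x/2)\le 2|x|$ to reduce matters to bounding the first absolute moment $\sum_j|c_j|^2|\lambda_j-\mu|$ by $2h$. The crucial move is the triangle inequality $|\lambda_j-\mu|\le(\lambda_j-\lambda)+(\mu-\lambda)$, which is effective here because both summands are nonnegative by definition of $\lambda$ and by \cref{prop:evalfacts}.\ref{prop:evalfacts1}. Summing against $|c_j|^2$ yields the bound $\vec x^\dagger(H+\Cut-\lambda)\vec x+(\mu-\lambda)$.

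To close, I note that $\vec x\in S$ is a $\mu$-eigenvector of $H_S$, so $H\vec x=\mu\vec x$ (the $H_{\overline S}$ component vanishes), and block-antidiagonality of $\Cut$ forces $\vec x^\dagger\Cut\vec x=0$; hence $\vec x^\dagger(H+\Cut-\lambda)\vec x=\mu-\lambda$. Invoking \cref{prop:evalfacts}.\ref{prop:evalfacts2}--\ref{prop:evalfacts3} gives $\mu-\lambda\le h$, so $T\sum_j|c_j|^2|\lambda_j-\mu|\le 2hT$ and $\bigl\|(e^{-i(H+\Cut)T}-e^{-i\mu T})\vec x\bigr\|^2\le 4hT$ uniformly in $\vec x$. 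The only nonroutine step is this first-moment estimate: a Cauchy--Schwarz-in-the-second-moment approach would introduce $\|\Cut\vec x\|$ and, via \cref{lemma:deltaonPmu}, only yield a bound of order $T(\sqrt{h\kappa}+h)$; the triangle-inequality-plus-variational identity for $\mu-\lambda$ is what removes the $\sqrt{\kappa/h}$ loss and gives the clean $2\sqrt{hT}$.
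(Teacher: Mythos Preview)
Your proof is correct and follows essentially the same approach as the paper: spectral decomposition of a unit vector in $\mathrm{Im}(P_\mu)$ in the eigenbasis of $H+\Cut$, the elementary bound $4\sin^2(x/2)\le 2|x|$, and a first-absolute-moment estimate $\sum_j|c_j|^2|\lambda_j-\mu|\le 2h$ driven by $\mu-\lambda\le h$. The only minor difference is in that last estimate---the paper uses the mean constraint $\sum_j|c_j|^2\lambda_j=\mu$ to balance the positive and negative deviations and then bounds the lower side via $\mu-\lambda_j\le\mu-\lambda\le h$, whereas your triangle inequality $|\lambda_j-\mu|\le(\lambda_j-\lambda)+(\mu-\lambda)$ followed by the variational identity $\vec x^\dagger(H+\Cut-\lambda)\vec x=\mu-\lambda$ arrives at the same $2(\mu-\lambda)\le 2h$ a touch more directly.
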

\begin{proof}
    To start, consider decomposing $\vec\mu$ in the eigenbasis of $H+\Cut$ as $\vec\mu = \sum_i \alpha_i \vec\lambda_i$. It then holds that
    \begin{equation}\label{eq:decomp}
        \sum_i\vert \alpha_i\vert^2 \lambda_i = \mu.
    \end{equation}
    Thus, we have that 
     \begin{align*}
         e^{-i(H+\Cut)t}\vec\mu &= \sum_i \alpha_i e^{-i\lambda_i t}\vec\lambda_i= e^{-i\mu t}\sum_i \alpha_i e^{-i(\lambda_i-\mu) t}\vec\lambda_i.
    \end{align*}
    Using \cref{eq:decomp} allows us to write
    \begin{align*}
         \left(e^{-i((H+\Cut)-\mu)t}-I\right)\vec\mu &= \sum_i \alpha_i(e^{-i(\lambda_i-\mu)t}-1)\vec\lambda_i.
    \end{align*}
    Placing a norm on the above expression we obtain the following bound:
    \begin{align*}
         \left\Vert  \left(e^{-i((H+\Cut)-\mu)t}-I\right)\vec\mu\right\Vert^2 = \sum_i\vert \alpha_i\vert^2 \vert e^{-i(\lambda_i-\mu)t}-1\vert^2 &= 4\sum_{i}\vert \alpha_i\vert^2 \sin^2 \left(\frac{\lambda_i-\mu}{2} t\right) \\
         &\leq 2\sum_{i}\vert{\alpha_i}\vert^2 \vert \lambda_i-\mu\vert t \leq 4ht,
     \end{align*}
     where to get to the last line we used that $\lambda\leq \mu \leq \lambda+h$ (\cref{prop:evalfacts}) and used \cref{eq:decomp} to obtain that $\sum_{\lambda_i\geq \mu} \vert \alpha_i\vert^2 (\lambda_i-\mu)= -\sum_{\lambda_i<\mu}\vert\alpha_i\vert^2(\lambda_i-\mu)$; consequently, it holds that
     \begin{align*}
     \sum_{i}\vert \alpha_i\vert^2 \vert \lambda_i-\mu\vert &= \sum_{\lambda_i>\mu} \vert \alpha_i\vert^2 (\lambda_i-\mu)-\sum_{\lambda_i<\mu} \vert \alpha_i\vert^2 (\lambda_i-\mu)\\
     &= -2\sum_{\lambda_i<\mu} \vert \alpha_i\vert^2(\lambda_i-\mu)\leq 2h\sum_{\lambda_i<\mu} \vert \alpha_i\vert^2\leq 2h. 
     \end{align*}
     The result follows immediately.
\end{proof}
Crucially, the result in \cref{thm:timeindep} does not depend explicitly on $\norm{\Delta}$---that is, while $h$ depends on $\Delta$, generically, $h\ll \norm{\Delta}$. Thus, such bounds capture the fact that there are scenarios, especially in computational problems of the sort analyzed in adiabatic quantum computing, where even though $\norm{\Delta}$ is large, the errors introduced by ignoring its contribution can be comparably small. 

Furthermore, the quantity $h$ is closely related to the spectral gap of the full Hamiltonian and is quadratically close to it for most systems which can presently be engineered\footnote{so-called stoquastic Hamiltonians, an unfortunate name for non-positive Hamiltonians}. (See \cref{eq:stoq_bnd}.) Computationally difficult problems typically come equipped with a spectral gap that is polynomially-to-exponentially small in the Hilbert space dimension $N$. In many of these cases, an appropriately chosen of $S$ yields $h \in \mathcal{O}(N^{-1})$ despite $\norm{\Delta(s)} \in \Omega(1)$. Our bounds are most applicable and useful in these scenarios.

\section{Main Result}\label{s:results}
 
With all necessary definitions and some intuition from the time independent case in place, we are ready to present our main result. 
We begin with some assumptions related to {{the spectral structure of $H+\Cut$, $H_S$, and $H_{\overline{S}}$ and a description of the intuition behind them:}}
\begin{assmpn}\label{assmpn:tbd}
   Let $\min\{\mu_1,\overline\mu_1\}> \max\{\mu,\overline\mu\}.$ 
\end{assmpn}
That is, we assume the ground state energies of the Hamiltonians $H_S$ and $H_{\overline{S}}$ are ``close'' relative to the local gaps $\Gamma_S$ and $\Gamma_{\overline{S}}$. This assumption allows us to introduce the following perturbed Hamiltonian while incurring a minimal cost to our error bound:
\begin{equation}\label{eq:H'}
        H':=H+(\mu-\overline{\mu})P_{\overline{S}},
    \end{equation}
which has $\mu=\overline{\mu}$. Such a property will end up being useful for technical reasons in \cref{lemma:perturbadiab}. Note that $H'$ is still block diagonal, and, from \cref{prop:evalfacts} (Items~\ref{prop:evalfacts1} and \ref{prop:evalfacts3}), we have
\begin{align}\label{eq:size-of-H-prime-perturb}
    \mu-\overline{\mu}\leq (\lambda+h)-\lambda = h,
\end{align}
so this perturbation is only of size $\mathcal{O}(h)$. Consequently, 
we will see that replacing $H$ with $H'$ introduces a relatively inconsequential additive factor to our error bounds. 
Importantly, this term has no impact on the ``local'' eigenstates $\{\vec\mu_j\}_j$ or $\{\vec{\overline{\mu}}_j\}_j$. In fact, all quantities local to $S$ are invariant under this perturbation (e.g. $P_\mu$ and $H_S$), as well as many quantities local to $\overline{S}$ (e.g. $P_{\overline{\mu}}$). Consequently, when acting on $P_\mu$, the perturbed unitary defined by the initial value condition $V'_{\mathrm{ad}}(0)=I$ and
\begin{align}\label{eq:unitaries_perturbed}
    i\dot{V}_{\mathrm{ad}}'(s)&=\left(TH'(s)+i[\dot{P}_\mu(s), P_\mu(s)]\right)V'_{\mathrm{ad}}(s),
\end{align}
acts identically to its unperturbed counterpart defined in \cref{eq:Vad}. On the other hand, some operators that act on the space $\overline{S}$ are changed by this perturbation and the corresponding changes in their norms will have to be accounted for in the analysis that follows. 
While this perturbation eases the analysis, we do suspect it to be a technical artifact that could potentially be removed.

\begin{assmpn}\label{assmpn:tbd2}
    Let $\Vert\Cut\Vert \leq c \min\{\Gamma_S,\Gamma_{\overline S}\}$ where $c\leq 1-\frac{h}{\min\{\Gamma_S,\Gamma_{\overline S}\}}\leq 1$.
\end{assmpn}
This assumption amounts to the saying that the size of the off-block diagonal perturbation $\Delta$ is sufficiently small. However, we only require that $\norm{\Delta}$ be of the same asymptotic order as the local gaps $\Gamma_S,\Gamma_{\overline{S}}$, which are generically much larger than $h$ (the ``small'' parameter in our problem). 
To make this statement explicit, it is also helpful to add two additional assumptions, which, while not necessary for our proofs, specify the set of problems where our bounds are useful for analyzing asymptotics:
\begin{assmpn}\label{assmpn:tbd3}
    Let $\frac{\kappa}{(1-c)\min\{\Gamma_S,\Gamma_{\overline{S}}\}}=\Theta\left(h^{2(\alpha-1)}\right)$ for some $\alpha>0$, where $\kappa:=\Vert H+\Delta-\lambda\Vert=\lambda_{N-1}-\lambda$.
\end{assmpn}
\cref{assmpn:tbd3} specifies how the local gaps compare to the spectral properties of $H+\Delta$. To get intuition, note that it is satisfied if $\kappa=\Theta(\min\{\Gamma_S,\Gamma_{\overline{S}}\}$.
\begin{assmpn}\label{assmpn:tbd4}
    Let $P_\downarrow$ be the spectral projection of $H+\Delta$ onto $(-\infty, \lambda+2h],$ $P_\uparrow:=I-P_\downarrow,$ and $\epsilon_T := \max\set*{\norm*{P_\uparrow U(s)P_\downarrow}, \norm*{\mathcal{M}^\perp U_\perp'(s)\vec\mu} }$. Then let $\epsilon_T\sqrt{\kappa}=\Theta(h^{\beta-\frac{1}{2}})$ for some $\beta>0$. 
\end{assmpn}
\cref{assmpn:tbd4} amounts to assuming that there is a sizeable gap $\Gamma_\updownarrow$ between the low energy spectrum with spectral projection $P_\downarrow$ of $H+\Delta$ and the high energy spectrum with spectral projector $P_\uparrow$. To see this, we note that $\norm{P_\uparrow U(s) P_\downarrow}$ is precisely what is upper bounded by the usual adiabatic theorem applied to the subspace $P_\downarrow$. Roughly, by the ``folklore'' adiabatic theorem, this means $\norm{P_\uparrow U(s) P_\downarrow}=\mathcal{O}\big(1/(T\Gamma_\updownarrow^2)\big)$. For a fully rigorous statement we can use e.g. Ref.~\cite{jansen2007bounds} (see, also, \cref{thm:JRS}). The second term in the definition of $\epsilon_T$ is upper bounded precisely by a local adiabatic theorem of the sort we prove in this paper. Roughly, $\norm*{\mathcal{M}^\perp U_\perp'(s)\vec\mu}=\mathcal{O}\big(1/(T\min\{\Gamma_S^2,\Gamma_{\overline{S}}^2\})\big)$. See \cref{cor:local-adiabtic-thm} for the rigorous statement. Or, rearranging, these expressions, this assumption says roughly that the minimum of the local gaps $\Gamma_S, \Gamma_{\overline{S}}$ and the gap $\Gamma_\updownarrow$ is $\Theta\Big(\sqrt{h^{1/2-\beta}\sqrt{\kappa}T^{-1}}\Big)$.
The implications of all four assumptions for the spectra of $H+\Delta$, $H_S$, and $H_{\overline{S}}$ are summarized by the diagram in \cref{fig:spectrum}.

\begin{figure}
\centering\includegraphics[width=0.8\textwidth]{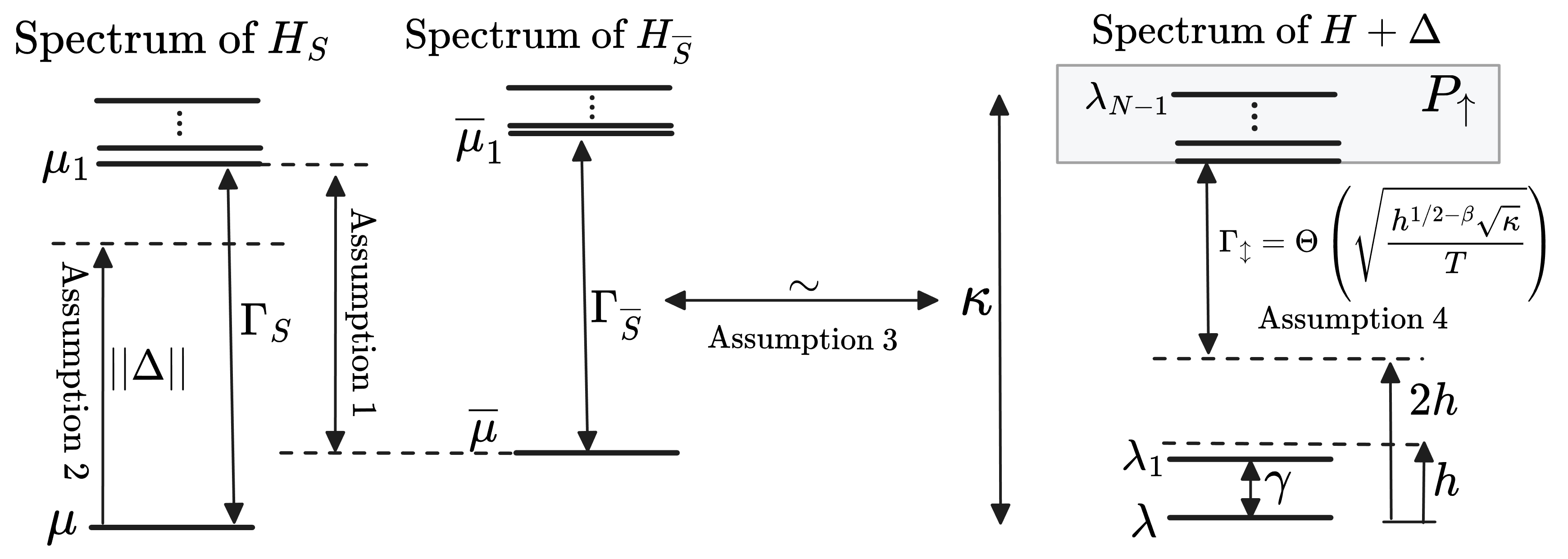}
\caption{A diagram indicating what \cref{assmpn:tbd,assmpn:tbd2,assmpn:tbd3,assmpn:tbd4} imply about the spectra of $H_S$, $H_{\overline{S}}$, and $H+\Delta$. }\label{fig:spectrum}
\end{figure}

While we do not believe that \cref{assmpn:tbd,assmpn:tbd2} to be overly restrictive, we remark that these assumptions may not be strictly necessary. In fact, we suspect that they may be able to be relaxed, while proving similar results. However, we were unable to find a rigorous way to remove them, so we leave this as a problem for future work. 
These minor caveats out of the way, our main result is as follows:
\begin{theorem}[Main result]\label{thm:mainresult}
    Let $H(s)$ be a one parameter, twice-differentiable family of Hamiltonians respecting the grading $\mathcal{H}=S\oplus\overline{S}$ and $\Delta(s)$ be a block-antidiagonal perturbation, where $s:=t/T\in[0,1]$. When acting on the ground space $P_\mu$ of $S$, the difference between the unitary evolution $U(1)$ generated by the Hamiltonian $H(s)+\Cut(s)$ acting for time $T$ and the adiabatic unitary evolution $V_{\mathrm{ad}}(1)$ generated by the block-diagonal Hamiltonian $H(s)$ is bounded as
    \[
    \norm{(U(1)-V_{\mathrm{ad}}(1))P_{\mu}(0)}\leq \max_{s\in[0,1]} \left[B\sqrt{hT}+\frac{C}{\eta T}\right],
    \]
    where    
\begin{align*}
B=2\sqrt{1 +  2\sqrt\frac{\kappa}{(1-c)\min\{\Gamma_S,\Gamma_{\overline S}\}} + \sqrt{\frac{\kappa}{h}}\epsilon_T}
\end{align*}
\begin{align*}
    C&=    \frac{2\Vert\dot{H}_{S}\Vert}{\Gamma_S^2}+\frac{\Vert\ddot{H}_{S}\Vert}{\Gamma_S^2}+\frac{10\Vert\dot{H}_{S}^2\Vert}{\Gamma_S^3}+\frac{\Vert{\dot{H}_{S}}\Vert}{\eta\Gamma_S^2}\left(\frac{\Vert{\dot{\Cut}^\perp\Vert}}{\min\{\Gamma_S,\Gamma_{\overline{S}}\}}+4\Vert\Cut^\perp\Vert \max\left\{\frac{\Vert{\dot{H}_{S}}\Vert}{\Gamma_S^2},\frac{{2\Vert\dot{H}_{\overline{S}}\Vert+\Vert \dot{H}_{S}\Vert}}{\Gamma_{\overline{S}}^2}\right\}\right).
\end{align*}
and $\epsilon_T := \max\{\norm*{P_\uparrow U(s)P_\downarrow}, \norm*{\mathcal{M}^\perp U_\perp'(s)\vec\mu} \},$ $   \eta:=1-\Vert{\Delta^\perp}\Vert/\min\{\Gamma_S,\Gamma_{\overline{S}}\},$, $\kappa:=\lambda_{N-1}-\lambda$, and $c:=\Vert \Cut\Vert /\min\{\Gamma_S,\Gamma_{\overline S}\}$ as defined in \cref{assmpn:tbd2,assmpn:tbd3,assmpn:tbd4}. The $\max_{s\in[0,1]}$ is understood to apply to each term individually.
\end{theorem}
\begin{remark} ``Theorem'' \ref{sthm:mainresult} holds as a first order reduction of \cref{thm:mainresult} {{when \cref{assmpn:tbd3,assmpn:tbd4} hold and, thus, $B=\mathcal{O}(1)$ and $C$ reduces to something quite resembling the usual adiabatic error applied to the subspace $S$. This is the limit where our results can be usefully applied i.e. to analyze quantum annealing.}}
\end{remark}
\begin{proof}
    The proof proceeds in two steps. 
    First, we define a unitary $U_\perp'(1)$ by the initial value condition $U_\perp'(0)=I$ and
    \begin{equation}\label{eq:Uperpprime}
    i\dot{U}'_\perp(s)=T(H'(s)+\Cut^\perp(s))U'_\perp(s),
    \end{equation}
    where $H'$ is as defined in \cref{eq:H'}.
    We then bound the difference between unitary evolution under $U'(1)$ and $U_\perp'(1)$ to yield
    \[
    \Vert (U(1)-U_{\perp}'(1))P_{\mu}(0)\Vert\leq B\sqrt{hT}.
    \]
    This is shown in \cref{lemma:cheegerperturb}, derived in \cref{s:cheegerperturb}, below. We then show the following bound
    \[
    \Vert (U_\perp'(1)-V_{\mathrm{ad}}(1))P_{\mu}(0)\Vert\leq \frac{C}{\eta T},
    \]
    in \cref{lemma:perturbadiab} in \cref{s:perturbadiabatic}. Applying the triangle inequality, the result follows.
\end{proof}

\section{Technical Details}\label{s:details}
In this section, we provide the proofs of the lemmas that lead to our main result in \cref{thm:mainresult}. We begin in \cref{s:spectral_bounds} by deriving a collection of results bounding the spectral norms of many of the quantities defined in \cref{s:mathematicalprelim}. In \cref{s:cheegerperturb}, we derive bounds on the leakage error and, in \cref{s:perturbadiabatic}, we derive the local version of the (perturbed) adiabatic theorem. 

\subsection{Spectral Bounds}\label{s:spectral_bounds}
The results in this section are a collection of bounds relating to the various spectra and spectral projections defined above. 
Some may be of independent interest. 

\begin{lemma}\label{prop:Pmunorms}
    Let $P(s)$ be the spectral projection of a $C^2([0,1])$ one-parameter family of Hamiltonians $H(s)$ with spectrum $\sigma(H(s))$ onto some $\alpha(s)\subseteq \mathbb{C}$. If $\sigma(H(s))\backslash\alpha(s) \neq \emptyset$ let $\mathrm{dist}(\alpha(s), \sigma(H(s))\backslash \alpha(s))\geq \delta(s)>0$, where $\delta(s)\in\mathbb{R}$.
    Then, letting $m(s)$ be the number of eigenvalues (counting multiplicity) contained in $\alpha(s),$
    \begin{align*}
     \Vert{\dot P}\Vert &\leq \frac{\Vert{ \dot{H}}\Vert}{\delta}, \\
    \Vert{(I-P)\ddot P P}\Vert&\leq \sqrt{m(s)}\frac{\Vert{\ddot H}\Vert}{\delta} + 4m(s)\frac{\Vert{\dot H}\Vert^2}{\delta^2}.\end{align*}
    If  $\sigma(H(s))\backslash\alpha(s) = \emptyset$, then $\norm{P}=\Vert{\dot{P}}\Vert=\Vert{\ddot{P}}\Vert=0.$
\end{lemma}
{
\begin{remark}
    In some cases, these bounds can be significantly tightened. However, this will usually require more specific knowledge about the Hamiltonian family than we would like to assume. As we are interested in applying our results to computational problems, we would like to be able to use bounds that do not require detailed knowledge of problem solutions a priori.
\end{remark}
}
\begin{proof}
    The case where $\sigma(H(s))\backslash\alpha(s) = \emptyset$ is trivially true. For the other case, note that
    \[\dot{P}(s) := \lim_{d\to 0}\frac{P(s+d)-P(s)}{d}.\] By the Davis-Kahan theorem \cite[Thm. VII.3.1]{Bhatia},\footnote{{{The Davis-Kahan $\sin\theta$ theorem says that for Hermitian operators $A,B$ and for two intervals $S_1=[a,b]\subset\mathbb{R}$ and  $S_2$ the complement in $\mathbb{R}$ of the interval $[a-\delta,b+\delta]$ then for all unitarily invariant norms
    \[
    \norm{EF}\leq \frac{1}{\delta} \norm{E(A-B)F}\leq \frac{1}{\delta}\norm{A-B},
    \]
    where $E$ ($F$) is the orthogonal projection of $S_1$ ($S_2$) onto the subspace spanned by the eigenvectors of $A$ ($B$). The inequalities below are a direct application of this theorem with $A=H(s+d)$, $B=H(s)$, $E=P^\perp(s+d)$, $F=P(s)$, and $\delta=\mathrm{dist}(\alpha(s),\sigma(H(s))\setminus\alpha(s))$.}}} we have that 
    \begin{align*}\Vert P(s+d)-P(s)\Vert &= \Vert P^\perp(s+d)P(s)\Vert \leq \frac{\Vert P^\perp(s+d)(H(s+d)-H(s))P(s)\Vert}{\mathrm{dist}(\alpha(s+d), \sigma(H(s))\backslash \alpha(s))}\\
    &\leq \frac{\Vert H(s+d)-H(s))\Vert}{\mathrm{dist}(\alpha(s+d), \sigma(H(s))\backslash \alpha(s))}\\
    \end{align*}
    Taking the limit, we have that 
    \begin{align*}
        \Vert \dot{P}(s)\Vert &= \lim_{d\to 0} \frac{\Vert P(s+d)-P(s)\Vert}{d}\\
        &\leq \lim_{d\to 0}\frac{1}{d}\frac{\Vert H(s+d)-H(s)\Vert}{\mathrm{dist}(\alpha(s+d), \sigma(H(s))\backslash \alpha(s))}\\
        &= \left(\lim_{d\to 0}\frac{\Vert H(s+d)-H(s)\Vert} {d}\right)\left(\lim_{d\to 0} \frac{1}{\mathrm{dist}(\alpha(s+d), \sigma(H(s))\backslash \alpha(s))}\right)\\
        &= \frac{1}{\delta(s)}\left\Vert \lim_{d\to 0}\frac{H(s+d)-H(s)}{d}\right\Vert= \frac{\Vert\dot H(s)\Vert}{\delta(s)},
    \end{align*}
    where we have that both limits on the third line exist because $H\in C^2([0,1]),$ implying that it is continuously differentiable and its eigenvalues are continuous with respect to $s$. The bound on $(I-P)\ddot P P$ is proven in \cite[Lemma 8]{jansen2007bounds}.
\end{proof}

Note that the bound on $\norm{\dot P(s)}$ is a factor of $\sqrt{m}$ tighter than that considered in \cite[Lemma 8]{jansen2007bounds}. We also believe the bound on $\norm{\ddot{P}}$ can be similarly improved to at least remove the factor of $m$ multiplying $\norm{\dot{H}}$, but we leave this for future analysis.
We will be applying this lemma to $P_\mu$ and $P_{\overline\mu}$, in the form of the following corollary:
\begin{cor}
    Let $P_\mu, P_{\overline\mu}$ $H_S,$ $H_{\overline{S}},$ $\Gamma_S,$ and $\Gamma_{\overline{S}}$ be defined as above. Then
    \[\Vert \dot P_\mu\Vert \leq \frac{\Vert \dot H_S\Vert}{\Gamma_S}\]
    and 
    \[\Vert (P_S-P_\mu)\ddot{P}_\mu P_\mu\Vert \leq \frac{\norm{\ddot H_S}}{\Gamma_S^2}+4\frac{\norm{\dot H_S}}{\Gamma_S}.\] The same bounds hold if $\mu$ and $S$ are replaced by $\overline\mu$ and $\overline S$ respectively.
\end{cor}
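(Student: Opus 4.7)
The plan is to apply \cref{prop:Pmunorms} directly to $H_S(s)$, viewed as a $C^2$ one-parameter family of self-adjoint operators on the fixed subspace $S$, with spectral projection $P_\mu(s)$ onto its lowest eigenvalue $\mu(s)$. Because $S$ does not depend on $s$, the family $H_S = P_S H P_S$ leaves $S$ invariant, and so do its derivatives $\dot H_S$ and $\ddot H_S$; thus the restriction $H_S\vert_S$ is a $C^2$ family on $S$, and the operator norms of $\dot H_S$ and $\ddot H_S$ on $\mathbb{C}^N$ agree with their norms on $S$. Within $S$, the spectrum of $H_S$ is $\{\mu_j(s)\}$, the distance from $\mu$ to $\sigma(H_S\vert_S)\setminus\{\mu\}$ is exactly $\Gamma_S(s)$ by definition, and the strict inequality $\mu<\mu_1$ ensures the ground-space multiplicity is $m=1$. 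Substituting $\delta=\Gamma_S$ and $m=1$ into the first bound of \cref{prop:Pmunorms} immediately yields $\norm{\dot P_\mu}\leq\norm{\dot H_S}/\Gamma_S$.

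For the second bound I would first apply \cref{prop:Pmunorms} with the same parameters to estimate $\norm{(I-P_\mu)\ddot P_\mu P_\mu}$, and then argue that this quantity coincides with $\norm{(P_S-P_\mu)\ddot P_\mu P_\mu}$. The key identity is $P_S P_\mu = P_\mu$, which holds because $P_\mu$ has range in $S$ for every $s$. Differentiating twice in $s$ and using $\dot P_S = 0$ gives $P_S \ddot P_\mu = \ddot P_\mu$, equivalently $P_{\overline S}\ddot P_\mu = 0$. Decomposing $I - P_\mu = (P_S - P_\mu) + P_{\overline S}$ then shows that the $P_{\overline S}$ piece annihilates $\ddot P_\mu$, so $(I-P_\mu)\ddot P_\mu P_\mu = (P_S - P_\mu)\ddot P_\mu P_\mu$, and the claimed inequality follows. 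The analogous bounds for $P_{\overline\mu}$, $H_{\overline S}$, and $\Gamma_{\overline S}$ come from the same argument with the roles of $S$ and $\overline S$ exchanged.

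I do not foresee any real obstacle; the proof is essentially a specialization of \cref{prop:Pmunorms} together with some bookkeeping about the invariant subspace. The only subtle check is that one must work with $H_S$ restricted to $S$ rather than on all of $\mathbb{C}^N$ in order to obtain $\delta = \Gamma_S$, as opposed to the smaller value $\min\{\Gamma_S,|\mu|\}$ that would appear if one used the spectrum of $H_S$ on $\mathbb{C}^N$ (which contains a null eigenspace coming from $\overline S$). This restriction is legitimate because every operator involved---$H_S$, $\dot H_S$, $\ddot H_S$, $P_\mu$, $\dot P_\mu$, $\ddot P_\mu$---preserves the grading $\mathcal{H} = S\oplus\overline S$.
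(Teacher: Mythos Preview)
Your proposal is correct and is precisely the paper's intended argument: the corollary is stated without proof, as an immediate specialization of \cref{prop:Pmunorms} to $H_S\vert_S$ with $\delta=\Gamma_S$ and $m=1$, together with the observation $(I-P_\mu)\ddot P_\mu P_\mu=(P_S-P_\mu)\ddot P_\mu P_\mu$. Note, incidentally, that the second displayed inequality in the corollary has its exponents transposed relative to the lemma; your direct application of \cref{prop:Pmunorms} yields $\norm{\ddot H_S}/\Gamma_S + 4\norm{\dot H_S}^2/\Gamma_S^2$, which is the form the paper actually uses later in the proof of \cref{cor:Fbounds}.
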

Using the projectors $P_\mu$ and $P_{\overline{\mu}}$ we are now ready to proceed with \cref{lem:expected_norm}. This result bounds the transitions between the ground spaces $P_{\mu}$ and $P_{\overline{\mu}}$ of the projected Hamiltonians $H_S$ and $H_{\overline{S}}$ using the Cheeger ratio. We will use this to quantitatively understand the leakage between these subspaces. In particular, our results will be most useful when the  leakage between subspaces is sufficiently small.

\begin{lemma}\label{lem:expected_norm}
    Let $P_\mu$ and $P_{\overline{\mu}}$ be the projectors onto the eigenvectors of $H$ corresponding to eigenvalues $\mu$ and $\overline\mu$ respectively. If $\mathcal{M} = P_\mu + P_{\overline\mu}$, then
    \[
        \norm{\mathcal{M}\Delta \mathcal{M}} \leq  h,
    \]
    where $h$ is the Cheeger ratio (\cref{def:cheeger}).
\end{lemma}
\begin{proof}
    $\mathcal{M}\Cut \mathcal{M}=P_\mu \Cut P_{\overline{\mu}} + P_{\overline\mu} \Cut P_{{\mu}}$ is block-antidiagonal, so $\Vert \mathcal{M}\Cut \mathcal{M}\Vert = \Vert P_\mu \Cut P_{\overline{\mu}}\Vert .$ Thus,  
    \begin{align*}
        \Vert \mathcal{M}\Cut\mathcal{M}\Vert &= \Vert P_\mu \Cut P_{\overline{\mu}}\Vert= \Vert P_\mu (H+\Delta-\lambda)P_{\overline\mu}\Vert\\
       &\leq  \sqrt{ \Vert P_\mu (H+\Delta-\lambda)P_\mu \Vert \Vert P_{\overline\mu} (H+\Delta-\lambda)P_{\overline\mu} }\Vert\\
       &= \sqrt{(\mu-\lambda)(\overline\mu-\lambda)},
    \end{align*} where to get to the second line we used Cauchy-Schwarz and that $H+\Cut-\lambda$ is positive-semidefinite. By \cref{prop:evalfacts}.\ref{prop:evalfacts2}-\ref{prop:evalfacts3}, we thus establish our result.
\end{proof}

We will also require the following looser bound. 
\begin{lemma}\label{lemma:deltaonPmu}
Let $H$ and $\Cut$ be as given above with $\lambda$ the smallest eigenvalue of $H+\Cut$ and $\mu$ the smallest local eigenvalue of $H_S$  with $P_\mu$ the corresponding projector. Then it holds that {
\[
\Vert \Delta P_\mu\Vert \leq \sqrt{h\kappa},
\]
}
where $h$ is the Cheeger ratio (\cref{def:cheeger}) and $\kappa:= \norm{H+\Delta-\lambda}=\lambda_{N-1}-\lambda$.
\end{lemma}
\begin{proof}
    First, {
    \begin{align}
        \norm{\cut \gsprojS} &= \norm{\projSbar({H}+\Cut-\mu)\gsprojS} = \norm{\projSbar({H}+\Cut-\lambda)\gsprojS}.\label{eq:aa}
    \end{align}}
    By \Cref{prop:evalfacts}.\ref{prop:evalfacts1}, ${H}+\Cut - \gse$ is positive semidefinite. Thus, $\sqrt{H+\Cut - 
 \gse}$ is well-defined. Hence,
    \begin{align}
        \norm*{({H}+\Cut-\gse)\gsprojS}&\leq \norm*{{\projSbar}\sqrt{{H}+\Cut-\gse}}\norm*{\sqrt{{H}+\Cut-\gse}\gsprojS} \nonumber
        \\
        &=\norm*{{\projSbar}\sqrt{{H}+\Cut-\gse}}\sqrt{\gseS-\gse} \nonumber
        \\&\leq \norm*{{\projSbar}\sqrt{{H}+\Cut-\gse}}\sqrt{\h}, \label{eq:ab}
    \end{align}
    where the final inequality follows from \Cref{prop:evalfacts}.\ref{prop:evalfacts3}. Combining \cref{eq:aa} with \cref{eq:ab}, 
    
    and applying the definition of $\kappa$ we have that $\norm{\cut \gsprojS} \leq \sqrt{\h\kappa}.$
\end{proof}

\begin{cor}\label{cor:deltaonPmubar}
It also holds that 
$\Vert \Delta P_{\overline{\mu}}\Vert \leq \sqrt{h\kappa},$ and $\Vert \Delta \mathcal{M} \Vert \leq \sqrt{h\kappa}. $

\end{cor}
\begin{proof}
    The proof of the first statement is identical to \cref{lemma:deltaonPmu} under interchange of the subspaces $S$, $\overline{S}$ as the proof is agnostic to which subspace is called $S$ and which is called $\overline{S}$. The same holds for \cref{prop:evalfacts}. The second statement immediately follows from the fact that
    \[
    \Vert \Delta \mathcal{M}\Vert \leq \max\left\{ \Vert \Delta P_\mu\Vert, \Vert \Delta P_{\overline{\mu}} \Vert\right\}.
    \]
\end{proof}

\begin{lemma}\label{lemma:projection_approximation}
Let $\lambda$ be the lowest eigenvalue of $H + \Cut$, as defined above, and let $P_\lambda$ be the orthogonal projector onto its ground state $\vec\lambda$. Let $(\gseS, \gsprojS)$ be the lowest eigenvalue-orthogonal projector pair of $\HS$, as defined above. Assuming that $P_S\gsprojS \neq 0$, then,
\[
\frac{\norm{\gsproj \gsprojSperp}}{\norm{\gsproj P_S}} \leq \sqrt{\frac{\gsrS-\gseS}{\gapS}},
\]
where $\gsprojSperp := \projS-\gsprojS$.
\end{lemma}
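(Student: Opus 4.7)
The plan is to reduce the operator-norm inequality to a statement about the vector $\vec\lambda$ and then exploit the variational characterization of $\mu$ inside $S$.

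First, I would note that $P_\lambda = \vec\lambda\vec\lambda^*$ is a rank-one projector, so for any projector $Q$ we have $P_\lambda Q = \vec\lambda(Q\vec\lambda)^*$ and thus $\norm{P_\lambda Q} = \norm{Q\vec\lambda}$. Hence the claim is equivalent to
\[
\frac{\norm{\gsprojSperp \vec\lambda}}{\norm{\projS \vec\lambda}} \leq \sqrt{\frac{\gsrS - \gseS}{\gapS}}.
\]
Since $\gsprojSperp = \projS - \gsprojS$ is an orthogonal projector supported in $S$, we have $\gsprojSperp \projS = \gsprojSperp$, so the numerator only depends on the $S$-component of $\vec\lambda$.

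Next, I would set $\vec w := \projS\vec\lambda/\norm{\projS\vec\lambda}$, which is well-defined by the hypothesis $\projS\gsprojS \neq 0$ (which forces $\projS\vec\lambda \neq 0$, since $\lambda_S$ is well-defined). Because $\vec w \in S$ and $\HS = \projS \HH \projS$, the Rayleigh quotient $\inprod{\vec w}{\HS \vec w}$ equals exactly $\lambda_S$ as defined in \cref{prop:evalfacts}. Decomposing $\vec w$ as $\gsprojS \vec w + \gsprojSperp \vec w$ and using that the spectrum of $\HS|_S$ above $\mu$ lies in $[\mu + \gapS, \infty)$, the variational lower bound gives
\[
\gsrS = \inprod{\vec w}{\HS \vec w} \geq \gseS \norm{\gsprojS \vec w}^2 + (\gseS + \gapS)\norm{\gsprojSperp \vec w}^2 = \gseS + \gapS \norm{\gsprojSperp \vec w}^2,
\]
so $\norm{\gsprojSperp \vec w}^2 \leq (\gsrS - \gseS)/\gapS$.

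Finally, translating back via $\gsprojSperp\vec w = \gsprojSperp\vec\lambda/\norm{\projS\vec\lambda}$ yields exactly the desired bound. There is no real obstacle here: the only subtlety is making sure the hypothesis $\projS\gsprojS \neq 0$ is used to guarantee the denominators are nonzero (so the ratio and $\lambda_S$ make sense), and that $\gsprojSperp$ really restricts to the orthogonal complement of $\gsprojS$ \emph{within} $S$ rather than the full orthogonal complement of $\gsprojS$ in $\mathcal H$; this is built into the definition $\gsprojSperp = \projS - \gsprojS$.
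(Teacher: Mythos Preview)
Your argument is correct and is essentially the same as the paper's: both compute the Rayleigh quotient $\lambda_S$ of $P_S\vec\lambda$ with respect to $H_S$ and bound the $P_\mu^\perp$-component via the gap $\Gamma_S$. The only cosmetic difference is that you first use the rank-one property of $P_\lambda$ to reduce the operator-norm statement to a vector statement about $\vec w = P_S\vec\lambda/\norm{P_S\vec\lambda}$ and then apply the variational bound directly, whereas the paper keeps everything at the operator level and passes through $\sqrt{H_S-\mu}$; after unwinding, the two computations coincide (indeed $\sqrt{H_S-\mu}\,P_\mu=0$, so the paper's inequality step is in fact an equality and recovers exactly your Rayleigh-quotient bound). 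One minor quibble: the stated hypothesis $P_S P_\mu\neq 0$ is automatically satisfied (since $P_\mu$ has range in $S$) and does not by itself imply $P_S\vec\lambda\neq 0$; what is actually needed, and what the paper assumes globally, is $\mathrm{supp}(\vec\lambda)\cap S\neq\emptyset$, so that $\norm{P_S\vec\lambda}>0$ and $\lambda_S$ is defined.
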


\begin{proof}
Using that $\projS$ is an orthogonal projector and defining $P_{\mu_j}$ as the orthogonal projector onto the eigenspace of $\HS$ with eigenvalue $\mu_j$ it holds that
\begin{align}
    \norm{\gsproj P_S}^2\left(\gsrS-\gseS\right) &= \norm{\gsproj\projS\gsproj}\left(\gsrS-\gseS\right)=\norm{\gsproj\projS({H}-\gseS)\projS\gsproj} =\norm{\sqrt{\HS-\gseS}\projS\gsproj}^2\nonumber
    \\&=\norm{\sqrt{\HS-\gseS}(P_\mu + P_\mu^\perp)\gsproj}^2\nonumber
    \geq\norm{\sqrt{\HS-\gseS}P_\mu^\perp\gsproj}^2\nonumber
    \geq \gapS \norm{\gsprojSperp\gsproj}^2 \nonumber
\end{align}
where the first inequality follows from the fact that both terms are positive semidefinite and for positive semidefinite operators $A,B$ it holds that $\norm{A+B}\geq\max\{\norm{A},\norm{B}\}$. Thus,
\begin{equation}
 \frac{\gsrS-\gseS}{\gapS} \geq \frac{\norm{\gsproj\gsprojSperp}^2}{\norm{\gsproj \projS}^2}.
\end{equation}
\end{proof}

\begin{cor}
With all terms defined as in \cref{lemma:projection_approximation}, 
\[
\frac{\norm{\gsproj \gsprojSperp}}{\norm{\gsproj P_S}} \leq \sqrt{\frac{h}{\gapS}}.
\]
Further, let $\mathcal{M}^\perp:=I-P_\mu-P_{\overline\mu}.$ Then
\[\Vert P_\lambda \mathcal{M}^\perp \Vert \leq \sqrt\frac{2h}{\min\{\Gamma_S,\Gamma_{\overline S}\}}.\]
\end{cor}
\begin{proof}
This first statement comes from using \cref{prop:evalfacts} to relax the inequality in \cref{lemma:projection_approximation}. Applying this statement and the definition of $\mathcal{M}^\perp$ gives the second statement.
\end{proof}

\begin{lemma}\label{cor:PionM}

Let $P_\downarrow$ be a spectral projection of $H+\Delta$ onto $(-\infty,\lambda+2h]$.

 $\mathcal{M}:= P_\mu+P_{\overline\mu},$ and let all other quantities be defined as above. Then, 
    \begin{align*}
        \Vert P_\uparrow \mathcal{M}\Vert &\leq {2}\sqrt\frac{h}{(1-c)\min\{\Gamma_S,\Gamma_{\overline S}\}},
    \end{align*}
    where $c:=\Vert \Cut\Vert /\min\{\Gamma_S,\Gamma_{\overline S}\}$ is as defined in \cref{assmpn:tbd2}.
\end{lemma}
\begin{proof}
    { First, note that
\begin{align*}
    \Vert \mathcal{M}P_\uparrow\Vert 
    &\leq \Vert P_\mu P_\uparrow\Vert+  \Vert P_{\overline\mu}P_\uparrow\Vert.
\end{align*}
Without loss of generality, consider $\Vert P_\mu P_\uparrow\Vert.$ We have that 
\[\Vert P_\mu P_\uparrow\Vert^2 = \mathrm{Tr}[P_\mu P_\uparrow P_\mu].\] We also have by \cref{prop:evalfacts} that 
\[    \mathrm{Tr}\left[(H+\Delta-\lambda)P_\mu \right]\leq h.
\]
However, we further have 
\begin{align*}
\mathrm{Tr}\left[(H+\Delta-\lambda)P_\mu \right] &=  \mathrm{Tr}\left[(H+\Delta-\lambda)P_\uparrow P_\mu \right]+\mathrm{Tr}\left[(H+\Delta-\lambda)P_\downarrow P_\mu \right]\\&\geq(\lambda_\uparrow-\lambda)\mathrm{Tr}[P_\mu P_\uparrow P_\mu],
\end{align*}
$\lambda_\uparrow := \min_j \{\lambda_j | \lambda_j > \lambda+2h\},$ as $(H+\Delta-\lambda)P_\uparrow\succcurlyeq (\lambda_\uparrow-\lambda)P_\uparrow,$ $(H+\Delta-\lambda)P_\downarrow\succcurlyeq0,$ and the expected value of a PSD matrix is nonnegative.
Thus, we have that 
\[\Vert P_\mu P_\uparrow\Vert \leq \sqrt\frac{h}{\lambda_\uparrow-\lambda}.\]
This also holds for $P_{\overline{\mu}}.$ Thus, we have that 
\[\Vert\mathcal{M}P_\uparrow\Vert \leq 2\sqrt\frac{h}{\lambda_\uparrow-\lambda}.\]
}

    Now, we move to bound $\lambda_\uparrow-\lambda.$ As $\lambda_1\leq \lambda+ 2h$, $\lambda_\uparrow-\lambda \geq  \lambda_2 - \lambda$. Then, by Weyl's inequality and \cref{assmpn:tbd}, we have that $\vert \lambda_2-\min\{\mu+\Gamma_S,\overline\mu+\Gamma_{\overline S}\}\vert \leq \Vert \Cut\Vert.$ Thus, $\lambda_2 \geq \min\{\mu+\Gamma_S,\overline\mu+\Gamma_{\overline S}\}-\Vert \Cut\Vert .$ Using \cref{assmpn:tbd2}, we thus have that $\lambda_\uparrow-\lambda \geq \lambda_2-\lambda \geq (1-c)\min\{\Gamma_S,\Gamma_{\overline{S}}\}$. Plugging this in, we arrive at our result.
\end{proof}

At this point we have assembled the basic tools with which we can proceed with proving the lemmas that form the main substance of \cref{thm:mainresult}.

\subsection{Bounding the Difference Between $U(1)$ and $U'_\perp(1)$} \label{s:cheegerperturb}
In this section, we prove the second component of our main result (\cref{thm:mainresult}). This contribution to the bound will scale with $\sqrt{T}$ and can intuitively be interpreted as the error introduced due to leakage of the ground state from $S$ to $\overline{S}$. This is why this error term grows with the evolution time $T$.
In particular, we prove the following lemma.
{ 
\begin{lemma}\label{lemma:cheegerperturb} 
Consider all terms as defined above. Then
\begin{align*}
    \Vert (U(1)-U_\perp'(1))P_\mu(0)\Vert ^2 &\leq \max_{s\in[0,1]} 4hT\left(1 +  2\sqrt\frac{\kappa}{(1-c)\min\{\Gamma_S,\Gamma_{\overline S}\}} + \sqrt{\frac{\kappa}{h}}\epsilon_T\right)
\end{align*}
where $\epsilon_T := \max\set*{\norm*{P_\uparrow U(s)P_\downarrow}, \norm*{\mathcal{M}^\perp U_\perp'(s)\vec\mu} }$, $P_\downarrow$, $\kappa:= \norm{H+\Delta-\lambda}$
, and $c:=\Vert \Cut\Vert /\min\{\Gamma_S,\Gamma_{\overline S}\}$ as defined in \cref{assmpn:tbd2,assmpn:tbd3,assmpn:tbd4}.
\end{lemma}
}

{\begin{remark}
    If \cref{assmpn:tbd2,assmpn:tbd3,assmpn:tbd4} hold then this bound has asymptotic scaling $\mathcal{O}(h^\nu T)$ where $\nu=\min\{\alpha,\beta,1\}\!>0$ for $\alpha,\beta$ as defined in \cref{assmpn:tbd3,assmpn:tbd4}.
\end{remark}}

\begin{proof}
First, using the fundamental theorem of calculus and the definitions of $U(s)$ and $U'_\perp(s)$ (\cref{eq:unitaries} and \cref{eq:Uperpprime}, respectively):
\begin{align*}
    U^\dagger(1)U'_\perp(1)-I &= \int^1_0 ds \frac{d}{ds}\left(U^\dagger(s)U'_\perp(s) \right)
    \\&=i T \int^1_0 ds U^\dagger(s)\left( H+\Delta -H - \Delta^\perp -(\mu-\overline{\mu})P_{\overline S}\right)U'_\perp(s)
    \\&=i T \int^1_0 ds U^\dagger(s)\left(\Delta - \Delta^\perp +(\overline\mu-\mu)P_{\overline S}\right)U'_\perp(s).
\end{align*}
Letting $\vec\mu=\vec\mu(0)$ for brevity, we also have that  
\begin{align}\label{eq:temp2}
\Vert (U(1)-U'_\perp(1))P_{\mu}(0)\Vert^2 &= \vec\mu^\dagger (U(1)-U'_\perp(1))^\dagger(U(1)-U'_\perp(1))\vec\mu \nonumber\\
&= 2-2\mathrm{Re}(\vec\mu^\dagger U^\dagger(1)U'_\perp(1)\vec\mu)\leq 2\vert \vec\mu^\dagger U^\dagger(1)U'_\perp(1)\vec\mu-1\vert.
\end{align}
Thus, we have that 
\begin{align}
    \abs{\vec\mu^\dagger U^\dagger(1)U_\perp'(1)\vec\mu-1} &= T \abs{ \int^1_0 ds \vec\mu^\dagger U^\dagger(s){ \left(\Delta - \Delta^\perp +(\overline\mu-\mu)P_{\overline S}\right)U'_\perp(s)\vec\mu }}
    \nonumber\\&=T \abs{ \int^1_0 ds \vec\mu^\dagger U^\dagger(s){ \left(\mathcal{M}^\perp\Delta \mathcal{M} + \mathcal{M}\Delta\mathcal{M}^\perp + \mathcal{M} \Delta \mathcal{M} +(\overline\mu-\mu)P_{\overline S}\right)U'_\perp(s)\vec\mu}}
    \nonumber\\&\leq T \abs{ \int^1_0 ds \vec\mu^\dagger U^\dagger(s){ \left(\mathcal{M}^\perp\Delta \mathcal{M} + \mathcal{M}\Delta\mathcal{M}^\perp + \mathcal{M} \Delta \mathcal{M}\right)}}+ hT
    \nonumber\\&\leq T \abs{ \int^1_0 ds \vec\mu^\dagger U^\dagger(s){ \left(\mathcal{M}^\perp\Delta \mathcal{M} + \mathcal{M}\Delta\mathcal{M}^\perp \right)}U'_\perp(s)\vec\mu} + 2hT, \label{eq:MdeltaMterm}
\end{align}
where the first inequality comes from recalling that $\vert \overline\mu-\mu\vert < h$ and the final line follows from applying \cref{lem:expected_norm} to the last term in the integrand.

Suppressing all dependencies upon $s$ for clarity we now bound the remaining terms. First,
\begin{align}
    \abs{\vec\mu^\dagger {U}^\dagger\mathcal{M}\Delta\mathcal{M}^\perp U_\perp'\vec\mu}
    &\leq \norm{\mathcal{M}\Delta}\norm{\mathcal{M}^\perp U'_\perp\vec\mu} \nonumber
    \leq {\epsilon_T\sqrt{h \kappa}},\label{eqn:second_term_bound}
\end{align}
where, in the final line, we have applied \cref{cor:deltaonPmubar} and the definition of $\epsilon_T$. The remaining term will take a bit more work. To start we have that
\begin{align}
    \abs{\vec\mu^\dagger {U}^\dagger\mathcal{M}^\perp\Delta \mathcal{M}U'_\perp\vec\mu} &\leq \norm{\mathcal{M}^\perp {U}\vec\mu}\norm{\Delta\mathcal{M}} \nonumber
    \leq \norm{\mathcal{M}^\perp {U}\vec\mu}{\sqrt{h\kappa}},
\end{align}
where we have, again, applied \cref{cor:deltaonPmubar}. We are now left with bounding $\Vert{\mathcal{M}^\perp {U}\vec\mu}\Vert$. To this end, observe that
\begin{equation}\label{eq:lem5a}  
    \norm{P_\uparrow \mathcal{M}}=\norm{\mathcal{M} - P_\downarrow\mathcal{M}} 
    \leq{2}\sqrt\frac{h}{(1-c)\min\{\Gamma_S,\Gamma_{\overline S}\}},
\end{equation}
by \cref{cor:PionM}. Hence, by the reverse triangle inequality, we have that 
\begin{align}  \label{eq:lem5b}  
    \abs{\norm{\mathcal{M}^\perp {U}\vec\mu} - \norm{\mathcal{M}^\perp {U} P_\downarrow \vec\mu}} &\leq \Vert \mathcal{M}^\perp {U} (\vec\mu-P_\downarrow \vec\mu)\Vert \leq \Vert \mathcal{M}^\perp {U}\Vert \Vert \vec\mu - P_\downarrow\vec\mu\Vert \nonumber \\
    &\leq \Vert \mathcal{M}^\perp {U}\Vert \Vert \mathcal{M}-P_\downarrow \mathcal{M}\Vert 
    \leq {2}\sqrt\frac{h}{(1-c)\min\{\Gamma_S,\Gamma_{\overline S}\}}.
\end{align}
As $\norm{\mathcal{M}^\perp {U}\vec\mu} \geq \norm{\mathcal{M}^\perp {U} P_\downarrow \vec\mu}$ we have from \cref{eq:lem5b} that
\begin{equation}\label{eq:lem5c}
    \norm{\mathcal{M}^\perp {U}\vec\mu} \leq {2}\sqrt\frac{h}{(1-c)\min\{\Gamma_S,\Gamma_{\overline S}\}} + \norm{\mathcal{M}^\perp {U} P_\downarrow \vec\mu},
\end{equation}
so we must bound $\norm{\mathcal{M}^\perp {U} P_\downarrow \vec\mu}$ to get the bound on $\norm{\mathcal{M}^\perp {U}\vec\mu}$ that we want. As $\Vert \mathcal{M}^\perp P_\uparrow {U}P_\downarrow \vec\mu\Vert \leq \Vert \mathcal{M}^\perp {U}P_\downarrow \vec\mu\Vert$, we have by a similar appeal to the reverse triangle inequality as before that
\begin{align*}
    \norm{\mathcal{M}^\perp U P_\downarrow \vec\mu} \leq {2}\sqrt\frac{h}{(1-c)\min\{\Gamma_S,\Gamma_{\overline S}\}}+\norm{\mathcal{M}^\perp P_\uparrow U P_\downarrow \vec\mu}.
\end{align*}
Finally, 
\begin{align*}
    \norm{\mathcal{M}^\perp P_\uparrow U P_\downarrow\vec\mu} \leq \norm{P_\uparrow U P_\downarrow} \leq \epsilon_T,
\end{align*}
by applying the definition of $\epsilon_T$. Combining these results we obtain the desired bound:
\[
    \norm{\mathcal{M}^\perp U\vec\mu} \leq \left( {4}\sqrt\frac{h}{(1-c)\min\{\Gamma_S,\Gamma_{\overline S}\}} + \epsilon_T\right).
\]
Thus, combining everything, we obtain the bound
\begin{align*}
    &\abs{\vec\mu^\dagger {U}^\dagger(1)U'_\perp(1)\vec\mu-1} \leq T \left(4 \sqrt\frac{h}{(1-c)\min\{\Gamma_S,\Gamma_{\overline S}\}} + \epsilon_T\right) {\sqrt{h\kappa}} 
    + T\epsilon_T{\sqrt{h\kappa}}
    + 2hT
    \\&\hspace{6em}= {2hT\left(1 + 
    2\sqrt{\frac{\kappa}{(1-c)\min\{\Gamma_S,\Gamma_{\overline S}\}}}\right) + 2 \epsilon_T T \sqrt{h\kappa} .}
\end{align*}
Plugging this in \cref{eq:temp2}, we arrive at our result.
\end{proof}

\subsection{Perturbed Adiabatic Theorem} \label{s:perturbadiabatic}
In this section, we prove the second component of our main result (\cref{thm:mainresult}).  This error term scales inversely with $T$ and can be understood as deriving an adiabatic theorem that applies locally to $H_S$. 
In particular, we prove the following lemma and associated corollary, which can be used to bound $\epsilon_T$ in \cref{lemma:cheegerperturb}.

\begin{lemma}\label{lemma:perturbadiab} 
\[
    \Vert (U_\perp'(1)-V_{\mathrm{ad}}(1))P_\mu(0)\Vert\leq \max_{s\in[0,1]} \frac{C}{\eta T},
\]
where
$C,\eta$ are as defined in \cref{thm:mainresult}.
\end{lemma}

\begin{cor}\label{cor:local-adiabtic-thm}
\[\Vert \mathcal{M}^\perp(s) U'(s)\vec\mu(0)\Vert \leq \max_{s\in[0,1]}\frac{C}{\eta T}\]
where $C,\eta$ are defined as above,
\end{cor}
\begin{proof} Note that 
\begin{align*}
    \Vert \mathcal{M}^\perp(s) U'(s)\vec\mu(0)\Vert &= \Vert \mathcal{M}^\perp(s) (U'(s)-V_\mathrm{ad}(s))P_\mu(0)\vec\mu(0)\Vert\leq \Vert \mathcal{M}^\perp(s) (U'(s)-V_\mathrm{ad}(s))P_\mu(0)\Vert\\
    &\leq \Vert (U'(s)-V_\mathrm{ad}(s))P_\mu(0)\Vert \leq \max_s \frac{C}{\eta T}.
\end{align*}

\end{proof}

Before presenting the proof of \cref{lemma:perturbadiab}, we must develop a collection of other results. While the analysis becomes somewhat unwieldy, the general techniques are fairly standard  and can be found in a variety of places. Common references for this sort of approach in the standard setting include the lecture notes Ref.~\cite{teufel2003adiabatic,childs2017lecture} and, also, the paper in Ref.~\cite{jansen2007bounds}. While the application of the techniques is somewhat unique, due to the mixture of familiarity and tediousness, we have left some of the proofs to supplemental material.

Before proceeding with this analysis, however, we must consider a technical detail. Recall in the proof of \cref{thm:mainresult} we introduced a block diagonal Hamiltonian $H'$ in \cref{eq:H'}, which is a perturbation of $H(s)$ within the subspace $\overline{S}$ so that for the perturbed Hamiltonian $\mu=\overline{\mu}$. This perturbation was introduced for the purposes of the analysis that follows. Importantly, as we noted, all quantities local to $S$ are invariant under this perturbation (e.g. $P_\mu$, $H_S$, and $V_{\mathrm{ad}}$), as well as many quantities local to $\overline{S}$ (e.g. $P_{\overline{\mu}}$).
With this in mind, to avoid unnecessary notational clutter in what follows, we will only use primed variables for objects that are not invariant under the perturbation. For instance, we will use $H_S$ instead of $H_S'$ as the quantities are equivalent. When acting on states local to $S$ we will use $V_{\mathrm{ad}}$ instead of $V_{\mathrm{ad}}'$. 

 In addition to previously noted references, in this context of perturbed adiabatic theorems

it is also worth noting Ref.~\cite{ohara2008adiabatic}, which derives an adiabatic theorem in the presence of noise. While the general concept of considering an adiabatic theorem under perturbation is similar, the actual formulation of the problem and the bounds they obtain are quite different than ours given their intended applications (i.e. characterizing noisy adiabatic quantum computation). For them, perturbation worsens the bound, whereas in our case, perturbation allows us to provide a tighter bound on the error by understanding the dynamics in terms of local adiabatic evolution.

These details out of the way, we begin by defining a pair of reduced resolvents.
\begin{definition}[Resolvents]\label{def:resolvent}
    Let $H_S:=P_S H P_S= P_S H' P_S$, $H_{\overline{S}}':=P_{\overline{S}} H' P_{\overline{S}}$, and $\mu=\overline{\mu}$ be the corresponding smallest local eigenvalues with associated projectors onto the local ground spaces $P_\mu$ and $P_{\overline{\mu}}$. Let $P_S$ and $P_{\overline{S}}$ be the projectors onto the subspaces $S$ and $\overline{S}$. 
    Define the (reduced) resolvents
    \begin{align*}
    R:&= (H'-\mu I)\vert_{ \ker (H'-\mu I)^\perp}^{-1}(P_\mu+P_{\overline{\mu}})^\perp
    \end{align*}
    and 
    \begin{align*}
    R_\perp:&=(H'+\Cut^\perp - \mu I)\vert_{\ker (H'+\Cut^\perp - \mu I)^\perp}^{-1}(P_\mu+P_{\overline{\mu}})^\perp.
    \end{align*}
\end{definition}
To understand the origin of these resolvents it is helpful to notice that $(H'-\mu I) $ is a map that takes $\mathbb{C}^N\to \mathrm{ker}(H'-\mu I)^\perp\subset \mathbb{C}^N.$ As this map is necessarily not injective, to be able to invert it, we must restrict to a set on which it is. The natural choice for this is $\mathrm{ker}(H'-\mu I)^\perp.$ Thus, we are left with $(H'-\mu I)\vert_{\mathrm{ker}(H'-\mu I)^\perp}:\mathrm{ker}(H'-\mu I)^\perp\to \mathrm{ker}(H'-\mu I)^\perp \subset \mathbb{C}^N,$ which is surjective onto $\mathrm{ker}(H'-\mu I)^\perp$ but not $\mathbb{C}^N.$ Thus, when we invert the map, and get $(H'-\mu I)\vert_{\mathrm{ker}(H'-\mu I)^\perp}^{-1},$ in order to extend this into a map $\mathbb{C}^N\to \mathrm{ker}(H'-\mu I)^\perp\subset\mathbb{C}^N,$ we must choose how it will behave on $\mathrm{ker}(H'-\mu I),$ which we do by precomposing it with $(I-P_\mu-P_{\overline\mu})=(P_\mu+P_{\overline\mu})^\perp$, which takes $\mathbb{C}^N$ to $\mathrm{ker}(H'-\mu I)^\perp.$ Consequently, it holds that 
\begin{equation}\label{eq:resolvent_fact}
    (H'-\mu I) R = R(H'-\mu I)= (I-P_\mu-P_{\overline\mu}).
\end{equation}
From \cref{eq:resolvent_fact}, we can immediately derive several lemmas related to the resolvents defined in \cref{def:resolvent}. To begin we compute the derivative of $R$.
\begin{lemma}\label{lemma:Rderiv}
    Let $R$ be defined as in \cref{def:resolvent}. Then
    \begin{align*}\dot{R} &= -(\dot{P}_{\mu}+\dot{P}_{\overline{\mu}})R - R(\dot{H}'-\dot{\mu}I)R - R(\dot{P}_{\mu}+\dot{P}_{\overline{\mu}}).
    \end{align*}
\end{lemma}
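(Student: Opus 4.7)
The plan is to work from the defining identity for $R$ given in \cref{eq:resolvent_fact}, namely $(H'-\mu I)R = R(H'-\mu I) = I-P_\mu-P_{\overline{\mu}}$, and differentiate. Differentiating the left equality with respect to $s$ yields
\[
(\dot{H}'-\dot{\mu}I)R + (H'-\mu I)\dot{R} = -\dot{P}_\mu - \dot{P}_{\overline{\mu}},
\]
so
\[
(H'-\mu I)\dot{R} = -(\dot{H}'-\dot{\mu}I)R - \dot{P}_\mu - \dot{P}_{\overline{\mu}}.
\]
The first step is to apply $R$ on the left of this identity to obtain
\[
R(H'-\mu I)\dot{R} = (I-P_\mu-P_{\overline{\mu}})\dot{R} = -R(\dot{H}'-\dot{\mu}I)R - R(\dot{P}_\mu+\dot{P}_{\overline{\mu}}),
\]
using \cref{eq:resolvent_fact} again on the left-hand side.

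The subtle point, and what I expect to be the main obstacle, is that $R$ is only a \emph{reduced} resolvent, so $R(H'-\mu I)$ is the projector $I-P_\mu-P_{\overline{\mu}}$ rather than the identity. Consequently, the equation above gives $(I-P_\mu-P_{\overline{\mu}})\dot{R}$, not $\dot{R}$ itself, and we must separately relate these two quantities. To do this, I would use the range property of $R$: since $R$ maps into $\ker(H'-\mu I)^\perp$, we have the identity $R = (I-P_\mu-P_{\overline{\mu}})R$. Differentiating this gives
\[
\dot{R} = -(\dot{P}_\mu+\dot{P}_{\overline{\mu}})R + (I-P_\mu-P_{\overline{\mu}})\dot{R},
\]
which rearranges to
\[
(I-P_\mu-P_{\overline{\mu}})\dot{R} = \dot{R} + (\dot{P}_\mu+\dot{P}_{\overline{\mu}})R.
\]

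Substituting this into the earlier expression and solving for $\dot{R}$ gives
\[
\dot{R} = -(\dot{P}_\mu+\dot{P}_{\overline{\mu}})R - R(\dot{H}'-\dot{\mu}I)R - R(\dot{P}_\mu+\dot{P}_{\overline{\mu}}),
\]
which is the claimed identity. All of the steps are formal manipulations; the only care required is in handling the kernel of $H'-\mu I$ correctly, i.e. remembering that one cannot simply invert $H'-\mu I$ on all of $\mathbb{C}^N$ and so must use the range property $R=(I-P_\mu-P_{\overline{\mu}})R$ as a separate input. A symmetric derivation starting from $R = R(I-P_\mu-P_{\overline{\mu}})$ provides a consistency check on the two outer terms.
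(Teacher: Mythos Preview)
Your proof is correct and follows essentially the same strategy as the paper: differentiate the defining identity \cref{eq:resolvent_fact} and then separately handle the kernel component of $\dot{R}$. The only cosmetic difference is a left/right swap: the paper differentiates $R(H'-\mu I)=I-P_\mu-P_{\overline\mu}$ and multiplies by $R$ on the right (then differentiates $R(P_\mu+P_{\overline\mu})=0$ for the missing piece), whereas you differentiate $(H'-\mu I)R=I-P_\mu-P_{\overline\mu}$ and multiply by $R$ on the left (then differentiate the range identity $R=(I-P_\mu-P_{\overline\mu})R$); these are exactly the symmetric variants you allude to in your final sentence.
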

\begin{proof}
Notice first that 
\begin{equation}\label{eq:RonHmuI}
    R(H'-\mu I) = (I-P_\mu-P_{\overline\mu}),
\end{equation}
and 
\begin{equation}\label{eq:RonPmuPmubar}
R(P_\mu+P_{\overline\mu}) = 0.
\end{equation} 

Differentiating \cref{eq:RonHmuI}, we get that 
\begin{align*}
    \dot{R} (H'-\mu I) +R(\dot H'-\dot\mu I)&= -\dot P_\mu - \dot{P}_{\overline{\mu} }
    \implies \dot{R} (H'-\mu I) &= -(\dot P_\mu +\dot P_{\overline\mu})-R(\dot{H}'-\dot{\mu} I).
\end{align*}
Acting from the right by $R$ we obtain
\begin{align}\label{eq:one}
    \dot{R}(I-P_\mu-P_{\overline\mu}) &= -(\dot{P}_\mu + \dot{P}_{\overline{\mu}})R - R(\dot{H}'-\dot{\mu}I)R.
\end{align}
Differentiating \cref{eq:RonPmuPmubar} we get that
\begin{align}\label{eq:two}
    \dot{R} (P_\mu + P_{\overline\mu}) &= - R \left(\dot{P}_\mu + \dot{P}_{\overline\mu}\right).
\end{align}
Adding \cref{eq:one} and \cref{eq:two} we get our result.
\end{proof}

To obtain our final bound it is essential to have bounds on the norms of a number of different quantities containing the resolvents. These bounds are established in the following lemma. 
\begin{lemma}\label{prop:Rprop}
    Let the reduced resolvent $R$ be defined as in \cref{def:resolvent}. Then, the following hold:
    \begin{enumerate}
    \item $\Vert R\Vert \leq \frac{1}{\min\{\Gamma_S,\Gamma_{\overline{S}}\}}$ \label{propRitem1}
    \item $\Vert \dot{R}\Vert \leq 4\max\left\{\frac{\Vert \dot{H}_S\Vert}{\Gamma_S^2}\frac{\Vert \dot{H}_{\overline{S}}'\Vert}{\Gamma_{\overline{S}}^2}\right\}$, \label{propRitem2}
    \item $\Vert \dot{R}P_S\Vert\leq 4\frac{\Vert \dot{H}_S\Vert}{\Gamma_S^2}$, and \label{propRitem3}
    \item $\Vert \dot{R}P_{\overline{S}}\Vert\leq 4\frac{\Vert \dot{H}_{\overline{S}}'\Vert}{\Gamma_{\overline{S}}^2}.$ \label{propRitem4}
    \end{enumerate}
\end{lemma}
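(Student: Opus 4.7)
\medskip

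\noindent\textbf{Proof proposal for \Cref{prop:Rprop}.}
The plan is to exploit the fact that, after the perturbation $H \mapsto H'$ that sets $\overline{\mu}=\mu$, the operator $H'-\mu I$ is block diagonal with respect to the grading $\mathcal{H}=S\oplus\overline{S}$, so the resolvent $R$ is block diagonal as well. This reduces every bound to a per-block computation that can be carried out using \Cref{prop:Pmunorms} and its corollary.

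For item \ref{propRitem1}, I would observe that $H'$ commutes with $P_S$ and $P_{\overline{S}}$, so $R = R P_S + R P_{\overline{S}}$. On $S$, the operator $(H_S - \mu I)(P_S - P_\mu)$ has smallest nonzero singular value $\Gamma_S$ by definition of the local gap, so $\Vert R P_S\Vert \le 1/\Gamma_S$. The identical argument on $\overline{S}$ gives $\Vert R P_{\overline{S}}\Vert \le 1/\Gamma_{\overline{S}}$, using that $\overline{\mu}=\mu$ after the perturbation. Taking the larger of the two reciprocals yields item \ref{propRitem1}.

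For items \ref{propRitem3} and \ref{propRitem4} I would apply \Cref{lemma:Rderiv} and right-multiply by $P_S$ (resp.\ $P_{\overline{S}}$). Since $P_\mu = P_S P_\mu$ (and similarly for $P_{\overline{\mu}}$), differentiating the identities $P_\mu P_{\overline{S}} = 0$ and $P_{\overline{\mu}} P_S = 0$ shows $\dot{P}_\mu P_{\overline{S}} = 0$ and $\dot{P}_{\overline{\mu}} P_S = 0$, so the mixed terms drop out on the appropriate block. Combined with the block-diagonality of $R$ and $\dot{H}'$, this reduces the right-hand side of \Cref{lemma:Rderiv} (times $P_S$) to
\begin{equation*}
\dot{R}P_S = -\dot{P}_\mu (RP_S) - (RP_S)(\dot{H}_S - \dot{\mu} I)(RP_S) - (RP_S)\dot{P}_\mu.
\end{equation*}
I would then bound $\Vert\dot{P}_\mu\Vert \le \Vert\dot{H}_S\Vert/\Gamma_S$ using the corollary to \Cref{prop:Pmunorms}, $\Vert RP_S\Vert \le 1/\Gamma_S$ using item \ref{propRitem1}, and $\Vert\dot{H}_S - \dot{\mu} I\Vert \le 2\Vert\dot{H}_S\Vert$ using the Hellmann–Feynman identity $\dot{\mu}=\langle\vec{\mu},\dot{H}_S\vec{\mu}\rangle$. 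The three terms then contribute $\Vert\dot{H}_S\Vert/\Gamma_S^2$, $2\Vert\dot{H}_S\Vert/\Gamma_S^2$, and $\Vert\dot{H}_S\Vert/\Gamma_S^2$ respectively, summing to the claimed $4\Vert\dot{H}_S\Vert/\Gamma_S^2$. Item \ref{propRitem4} is verbatim the same argument with $S$ and $\overline{S}$ swapped, now using $H'_{\overline{S}}$ and $\Gamma_{\overline{S}}$ (note that $\overline\mu=\mu$ after the perturbation, so $\dot{\overline\mu}$ plays the same role as $\dot\mu$ and continues to satisfy $|\dot{\overline\mu}|\le\Vert\dot{H}'_{\overline{S}}\Vert$).

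Finally, item \ref{propRitem2} follows immediately from items \ref{propRitem3} and \ref{propRitem4}: since $R$ (and hence $\dot{R}$) is block diagonal with respect to $S\oplus\overline{S}$, one has $\Vert\dot{R}\Vert = \max\{\Vert\dot{R}P_S\Vert,\Vert\dot{R}P_{\overline{S}}\Vert\}$, so the max bound drops out directly. I do not anticipate a serious obstacle here; the only subtlety is the bookkeeping that certifies $\dot{P}_\mu$, $\dot{P}_{\overline{\mu}}$, $\dot{H}'$, and $R$ all respect the fixed grading, after which the computation is a straightforward application of the triangle inequality and submultiplicativity.
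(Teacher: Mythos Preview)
Your proposal is correct and follows essentially the same approach as the paper: both exploit the block-diagonality of $R$ with respect to the fixed grading $S\oplus\overline{S}$, apply \Cref{lemma:Rderiv} together with the bounds $\Vert\dot P_\mu\Vert\le\Vert\dot H_S\Vert/\Gamma_S$, $\Vert RP_S\Vert\le 1/\Gamma_S$, and $\Vert(\dot H'-\dot\mu I)P_S\Vert\le 2\Vert\dot H_S\Vert$, and sum the three contributions to obtain the factor $4$. The only cosmetic difference is ordering: the paper proves item~\ref{propRitem2} directly and then remarks that items~\ref{propRitem3}--\ref{propRitem4} follow by projecting onto one block, whereas you prove items~\ref{propRitem3}--\ref{propRitem4} first and deduce item~\ref{propRitem2} from the block-diagonality of $\dot R$; the computations are otherwise identical.
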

\begin{proof}
    \cref{propRitem1} follows by simply placing a norm on \cref{eq:resolvent_fact} and applying the Cauchy-Schwarz inequality:
    \[
    \norm{R}\leq \frac{\norm{I-P_\mu-P_{\overline{\mu}}}}{\norm{H'-\mu I}} \leq \frac{1}{\min\{\Gamma_S,\Gamma_{\overline{S}}\}}
    \]    
    \cref{propRitem2} comes from placing norms on the result of \cref{lemma:Rderiv}:
    \begin{align*}
        \Vert \dot R\Vert &= \Vert (\dot{P}_{\mu}+\dot{P}_{\overline{\mu}})R + R(\dot{H}'-\dot{\mu}I)R + R(\dot{P}_{\mu}+\dot{P}_{\overline{\mu}})\Vert\\
        &\leq \max\{ \Vert P_S(\dot{P}_\mu R + R(\dot H'-\dot \mu I)R + R\dot P_\mu)P_S\Vert, \Vert P_{\overline S}(\dot{P}_{\overline\mu} R + R(\dot H'-\dot \mu I)R + R\dot P_{\overline{\mu}})P_{\overline S}\Vert\}\\
        &\leq \max\{ 2\Vert \dot{P}_\mu\Vert \Vert R P_S\Vert + \Vert RP_S\Vert^2\Vert(\dot H'-\dot \mu I)P_S \Vert, 2\Vert \dot{P}_{\overline\mu}\Vert \Vert RP_{\overline S}\Vert + \Vert RP_{\overline S}\Vert^2 \Vert (\dot H'-\dot \mu I) P_{\overline S}\Vert  \}.
    \end{align*}
    Using the facts that $\Vert \dot P_\mu\Vert \leq \Vert \dot{H}_S\Vert/\Gamma_S$ (\cref{prop:Pmunorms}), $\norm{RP_S}\leq 1/\Gamma_S$, $\norm{RP_{\overline{S}}}\leq 1/\Gamma_{\overline{S}}$, $\norm{(\dot H'-\dot\mu I) P_S}\leq 2\norm{\dot H_S}$, and $\norm{(\dot H'-\dot\mu I) P_{\overline{S}}}\leq 2\norm{\dot H_{\overline{S}}'}$ completes the proof:
    \begin{align*}
         \Vert \dot R\Vert &\leq \max\left\{2\frac{\Vert\dot H_S\Vert}{\Gamma_S} \frac{1}{\Gamma_S} + \frac{2\Vert\dot{H}_S\Vert}{\Gamma_S^2}, 2\frac{\Vert\dot H_{\overline S}'\Vert}{\Gamma_{\overline S}} \frac{1}{\Gamma_{\overline S}} + \frac{2\Vert\dot{H}_{\overline S}'\Vert}{\Gamma_{\overline S}^2}\right\} \leq 4\max\left\{\frac{\Vert \dot{H}_S\Vert}{\Gamma_S^2},\frac{\Vert \dot{H}_{\overline{S}}'\Vert}{\Gamma_{\overline{S}}^2}\right\}
    \end{align*}
    \cref{propRitem3} and \cref{propRitem4} follow by almost identical arguments, except the projectors eliminate one of the terms in the maximization.
\end{proof}

We now note a fact about perturbations of resolvents, which follows from the second resolvent identity (e.g. Thm. 4.8.2 of Ref.~\cite{hille1996functional}), which states that $R-R_\perp  = R\Delta^\perp R_\perp = -R_\perp \Delta^\perp R$. Defining
\begin{equation}\label{fact:ResolventIdentity}
    L := (I+R\Cut^\perp),
\end{equation}
and rearranging the second resolvent identity lets us note that 
\begin{equation}
R_\perp = L^{-1}R.
\end{equation}
Furthermore, differentiating this equation yields that
\begin{equation}
    \dot{R}_\perp = L^{-1}\dot{R} - L^{-1}(\dot{R}\Cut^\perp+R\dot{\Cut}^\perp)L^{-1}R. 
\end{equation}

These useful facts established, we now present the following proposition, which bounds the norm of the operator $L$.
\begin{prop}\label{prop:normLinv}
    Let $L$ be defined as above in \cref{fact:ResolventIdentity}. Then 
    \[\Vert L^{-1} \Vert \leq \frac{1}{\eta},\] where $\eta := 1-\Vert \Cut^\perp\Vert/\min\{\Gamma_S,\Gamma_{\overline{S}}\}.$
\end{prop}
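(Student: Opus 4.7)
The plan is to invert $L = I + R\Delta^\perp$ using a Neumann series. The argument is essentially the standard small-perturbation fact that if $\|M\| < 1$ then $(I+M)^{-1} = \sum_{k=0}^\infty (-M)^k$ with norm at most $1/(1-\|M\|)$; the only content is to verify that the relevant operator has norm strictly less than one and to identify that norm with the quantity appearing in $\eta$.

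First I would bound $\|R\Delta^\perp\|$. By submultiplicativity and \cref{prop:Rprop}\ref{propRitem1},
\[
\|R\Delta^\perp\| \leq \|R\|\,\|\Delta^\perp\| \leq \frac{\|\Delta^\perp\|}{\min\{\Gamma_S,\Gamma_{\overline{S}}\}} = 1 - \eta.
\]
Next I would check that this is strictly less than $1$. Since $\Delta^\perp = (I-P_\mu - P_{\overline{\mu}})\Delta(I-P_\mu - P_{\overline{\mu}})$ is a compression of $\Delta$, we have $\|\Delta^\perp\| \leq \|\Delta\|$, and \cref{assmpn:tbd2} gives $\|\Delta\| \leq c\min\{\Gamma_S,\Gamma_{\overline{S}}\}$ with $c \leq 1 - h/\min\{\Gamma_S,\Gamma_{\overline{S}}\} < 1$. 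Hence $1-\eta = \|\Delta^\perp\|/\min\{\Gamma_S,\Gamma_{\overline{S}}\} < 1$, so $\eta > 0$ and the Neumann series $\sum_{k \geq 0}(-R\Delta^\perp)^k$ converges absolutely in operator norm.

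Finally, summing the geometric bound yields
\[
\|L^{-1}\| = \Bigl\|\sum_{k=0}^\infty (-R\Delta^\perp)^k\Bigr\| \leq \sum_{k=0}^\infty \|R\Delta^\perp\|^k \leq \frac{1}{1-(1-\eta)} = \frac{1}{\eta},
\]
which is the claimed bound. There is essentially no obstacle here; the only mildly delicate point is the passage from $\|\Delta^\perp\| \leq \|\Delta\|$ (since compressing by an orthogonal projector is norm non-increasing) together with \cref{assmpn:tbd2} to certify $\eta > 0$, which is why the assumption was stated in the form it was. All other steps are routine applications of the spectral bound on $R$ from \cref{prop:Rprop} and the standard Neumann series estimate.
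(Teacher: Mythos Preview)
Your proof is correct and follows essentially the same Neumann series argument as the paper: bound $\|R\Delta^\perp\|$ via $\|R\|\leq 1/\min\{\Gamma_S,\Gamma_{\overline S}\}$, invoke \cref{assmpn:tbd2} to ensure the series converges, and sum the geometric series. If anything you are slightly more careful than the paper, writing the correct sign $(-R\Delta^\perp)^k$ and explicitly noting $\|\Delta^\perp\|\le\|\Delta\|$ to certify $\eta>0$.
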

\begin{proof}
    Consider a Neumann series 
    \[
    L^{-1}=\sum_{j=0}^\infty (I-L)^j=\sum_{j=0}^\infty  (R\Delta^\perp)^j.
    \]
    Taking the norm of this expression and using \cref{assmpn:tbd2} we have that
    \begin{align*}
    \norm{L^{-1}} &\leq \sum_{j=0}^\infty \norm{R\Delta^\perp}^j\leq \sum_{j=0}^\infty \norm{R}^j\norm{\Delta^\perp}^j = \frac{1}{1-\Vert R\Vert \Vert \Cut^\perp \Vert} \leq \frac{1}{1-\frac{\Vert\Cut^\perp\Vert}{\min\{\Gamma_S,\Gamma_{\overline{S}}\}}}.
    \end{align*}

\end{proof}

To proceed further, we now define the operator
\begin{equation}\label{def:F}
   F:=R_\perp \dot{P}_{\mu}P_{\mu}+P_{\mu}\dot{P}_{\mu}R_\perp.
\end{equation}

Leaving the details of the proofs to the supplemental material as they are not terribly enlightening and consist mostly of tedious algebra we state the following technical lemma about this operator:

\begin{lemma} \label{lemma:UperpPmuPmubarUperp}
Let $U_\perp',$ ${P}_{\mu},$ and $F$ be as defined above. Then, 
    \[(U_\perp')^{\dagger}[\dot{{P}}_{\mu},{P}_{\mu}]U_\perp' = \frac{1}{iT}\left(\dot{\tilde{F}}-(U_\perp')^{\dagger}\dot{F}U_\perp'\right),\] where $\tilde{F}:=(U_\perp')^{\dagger} FU_\perp'.$
\end{lemma}

With this result, at last, we are ready to move towards our ultimate goal of bounding the difference between the action of the unitaries $U_\perp'$ and $V_{\mathrm{ad}}$ acting on the initial ground state $P_\mu(0)$ of the space $S$. In the lemma that follows, using standard techniques for proving the usual adiabatic theorem, we will express this difference as an integral expression in terms of the operator $F$ defined above. The result of this manipulations will then be able to be easily bounded with the help of the reduced resolvents described previously. 

\begin{lemma}\label{lemma:perturbedwonorms} Let $U_\perp', V_{\mathrm{ad}}, F,$ and $ P_\mu$ be defined as above. Then, 
    \begin{align*}
    (U_\perp'(1) - V_{\mathrm{ad}}(1))&P_\mu(0)=\nonumber \\&-\frac{1}{iT}U_\perp'(1)\left[[(U_\perp')^{\dagger}{F}{P}_{\mu} V_{\mathrm{ad}}]_0^1-\int_0^1\mathrm{d}s\left((U_\perp')^\dagger{F} \dot{{P}}_{\mu}{P}_{\mu}V_{\mathrm{ad}}+(U_\perp')^\dagger\dot{F}{P}_{\mu} V_{\mathrm{ad}}\right)\right].
    \end{align*}
\end{lemma}
\begin{proof} We rewrite the left hand side as follows:
    \begin{align*}
    (U_\perp'(1)-V_{\mathrm{ad}}(1))P_\mu(0) &= U_\perp'(1)(I-(U_\perp')^{\dagger}(1)V_{\mathrm{ad}}(1))P_\mu(0)\\
    &= -U_\perp'(1)\int_{0}^1 \mathrm{d}s (U_\perp')^{\dagger}(iT(H'+\Cut^\perp - H) + [\dot{P}_\mu,P_\mu] )V_{\mathrm{ad}}P_\mu(0)\\
    &= -U_\perp'(1)\int_{0}^1 \mathrm{d}s (U_\perp')^{\dagger} [\dot{P}_\mu,P_\mu]V_{\mathrm{ad}}P_\mu(0).
\end{align*}
Using \cref{lemma:UperpPmuPmubarUperp}, we replace $(U_\perp')^{\dagger} [\dot{P}_\mu, P_\mu]$ with $\frac{1}{iT}\left(\dot{\tilde{F}}-(U_\perp')^\dagger \dot{F} U_\perp'\right)(U_\perp')^\dagger$ and integrate the first term by parts to get 
\begin{align*}
    &(U_\perp'(1)-V_{\mathrm{ad}}(1))P_\mu(0) = -\frac{1}{iT}U_\perp'(1)\int_{0}^1 \mathrm{d}s \left(\dot{\tilde{F}} - (U_\perp')^\dagger \dot{F}U_\perp'\right)(U_\perp')^\dagger V_{\mathrm{ad}}P_\mu(0)\\
    &\quad= -\frac{1}{iT} U_\perp'(1) \left(\left[ \tilde{F}(U_\perp')^\dagger V_{\mathrm{ad}}P_\mu(0)\right]_0^1 -\int_0^1\mathrm{d}s (U_\perp')^\dagger \left(F\left(iT\Cut^\perp + [\dot{P}_\mu,P_\mu]\right)+\dot{F}\right)V_{\mathrm{ad}}P_\mu(0)\right)\\
    &\quad= -\frac{1}{iT}U_\perp'(1)\left[[(U_\perp')^\dagger{F}{P}_{\mu} V_{\mathrm{ad}}]_0^1-\int_0^1\mathrm{d}s\left((U_\perp')^\dagger{F} \dot{{P}}_{\mu}{P}_{\mu}V_{\mathrm{ad}}+(U_\perp')^\dagger\dot{F}{P}_{\mu} V_{\mathrm{ad}}\right)\right].
\end{align*}
\end{proof}

To obtain our desired result we simply need to place norms on the expression derived in \cref{lemma:perturbedwonorms}. Before doing so, however, we need bounds on a few simpler quantities that show up in the integrand of this expression. These bounds are derived in the following lemma.

\begin{lemma}\label{cor:Fbounds}
Let all quantities be defined as above.
    Then
    \[\max\{\Vert FP_\mu\Vert, \Vert F(P_S-P_\mu)\Vert\}\leq \frac{1}{\eta}\frac{\Vert \dot{H}_S\Vert}{\Gamma_S^2} \]
    and $\Vert\dot{F}P_\mu \Vert$ is upper bounded by
    \begin{align*}
    \frac{1}{\eta}\left(\frac{1}{\eta}\frac{\Vert\dot{H}_S\Vert}{\Gamma_S^2}\left(\frac{\Vert\dot{\Cut}^\perp\Vert}{\min\{\Gamma_S,\Gamma_{\overline{S}}\}}+4\Vert \Cut^\perp\Vert \max\left\{\frac{\Vert \dot{H}_S\Vert}{\Gamma_S^2},\frac{2\Vert\dot{H}_{\overline{S}}\Vert+\Vert \dot{H}_{S}\Vert}{\Gamma_{\overline{S}}^2}\right\}\right)+9\frac{\Vert\dot{H}_S\Vert^2}{\Gamma_S^3}+\frac{\Vert\ddot{H}_S\Vert}{\Gamma_S^2}\right).
    \end{align*}
\end{lemma}
\begin{proof}
    First, note that $FP_\mu = L^{-1}R\dot P_\mu P_\mu$ and $F(P_S-P_\mu) = P_\mu \dot{P}_\mu L^{-1}RP_S.$
    Applying norms and using that $\Vert \dot P_\mu\Vert \leq \Vert \dot{H}_S\Vert/\Gamma_S$ \cite[Lemma 8]{jansen2007bounds}, \cref{prop:Rprop,prop:normLinv}, we get our first bound. For the next bound, we recall that 
     \begin{align}\label{eq:dotFP}
     \dot{F}P_\mu &=-L^{-1}\left(\dot\Cut^\perp R+\Cut^\perp\dot{R} \right)L^{-1}R\dot{P}_\mu P_\mu + L^{-1}\dot{R}\dot{P}_\mu P_\mu + L^{-1}{R}\ddot{P}_\mu P_\mu  - P_\mu\dot{P}_\mu L^{-1}{R}\dot{P}_\mu \nonumber\\
     &=: A_0+A_1+A_2+A_3, 
     \end{align} 
     where $A_n$ is the $n^{\mathrm{th}}$ term in the sum. Then using that $\Vert (I-P_\mu)\ddot{P}_\mu P_\mu\Vert \leq \frac{\Vert\ddot{H}_S\Vert}{\Gamma_S}+4\frac{\Vert\dot{H}_S\Vert^2}{\Gamma_S^2}$ \cite[Lemma 8]{jansen2007bounds}, we can bound the norms of the individual $A_n$. In particular, we have that 
     \begin{align*}
         \Vert A_0\Vert &= \Vert L^{-1}(\dot{\Cut}^\perp R + \Cut^\perp \dot R)L^{-1} R\dot{P}_\mu P_\mu\Vert =  \Vert L^{-1}(\dot{\Cut}^\perp R + \Cut^\perp \dot R)L^{-1} R(P_S-P_\mu)\dot{P}_\mu P_\mu\Vert\\
         &\leq  \Vert L^{-1}\Vert^2 \Vert (\dot{\Cut}^\perp R + \Cut^\perp \dot R)\Vert  \Vert R(P_S-P_\mu)\Vert \Vert \dot{P}_\mu\Vert  \\
         &\leq \frac{1}{\eta^2}\frac{\Vert \dot H_S\Vert }{\Gamma_S^2} \left(\frac{\Vert\dot{\Cut}^\perp\Vert}{\min\{\Gamma_S,\Gamma_{\overline{S}}\}}+4\Vert \Cut^\perp\Vert \max\left\{\frac{\Vert \dot{H}_S\Vert}{\Gamma_S^2},\frac{\Vert\dot{H}_{\overline{S}}'\Vert}{\Gamma_{\overline{S}}^2}\right\}\right).
    \end{align*}
    Bounding the other terms is more straightforward:
    \begin{align*}
         \Vert A_1\Vert &= \Vert L^{-1} \dot{R} \dot{P}_\mu \Vert \leq \Vert L^{-1}\Vert \Vert \dot R P_S\Vert \Vert \dot P_\mu\Vert \leq 4\frac{1}{\eta}\frac{\Vert \dot H_S\Vert^2}{\Gamma_S^3},\\        
         \Vert A_2\Vert &= \Vert L^{-1} R\ddot P_\mu P_\mu\Vert = \Vert L^{-1} R P_S(I-P_\mu)\ddot P_\mu P_\mu\Vert \leq \Vert L^{-1} \Vert\Vert RP_S\Vert \Vert(I-P_\mu)\ddot P_\mu P_\mu\Vert \\&\leq \frac{1}{\eta}\left(\frac{\Vert \ddot{H}_S\Vert }{\Gamma_S^2}+4\frac{\Vert \dot H_S\Vert^2}{\Gamma_S^3}\right), \\
         \Vert A_3\Vert &= \Vert P_\mu \dot{P}_\mu L^{-1} R \dot{P}_\mu\Vert \leq \Vert L^{-1}\Vert \Vert \dot{P}_\mu\Vert^2\Vert RP_S\Vert \leq \frac{1}{\eta} \frac{\Vert \dot{H}_S\Vert^2}{\Gamma_S^3}.
     \end{align*}
     Using these results and taking the norm of \cref{eq:dotFP} we obtain via the triangle inequality that
     \begin{align*}
         \Vert \dot FP_\mu\Vert &\leq \Vert A_0\Vert +\Vert A_1\Vert +\Vert A_2\Vert +\Vert A_3\Vert\\
         &\leq \left(\frac{1}{\eta^2}\frac{\Vert \dot H_S\Vert }{\Gamma_S^2} \left(\frac{\Vert\dot{\Cut}^\perp\Vert}{\min\{\Gamma_S,\Gamma_{\overline{S}}\}}+4\Vert \Cut^\perp\Vert \max\left\{\frac{\Vert \dot{H}_S\Vert}{\Gamma_S^2},\frac{\Vert\dot{H}_{\overline{S}}'\Vert}{\Gamma_{\overline{S}}^2}\right\}\right)\right)+\left(4\frac{1}{\eta}\frac{\Vert \dot H_S\Vert^2}{\Gamma_S^3}\right)\\
         &\qquad+\left(\frac{1}{\eta}\left(\frac{\Vert \ddot{H}_S\Vert }{\Gamma_S^2}+4\frac{\Vert \dot H_S\Vert^2}{\Gamma_S^3}\right)\right)+\left(\frac{1}{\eta} \frac{\Vert \dot{H}_S\Vert^2}{\Gamma_S^3}\right)\\
         &\leq \frac{1}{\eta}\left(\frac{1}{\eta}\frac{\Vert\dot{H}_S\Vert}{\Gamma_S^2}\left(\frac{\Vert\dot{\Cut}^\perp\Vert}{\min\{\Gamma_S,\Gamma_{\overline{S}}\}}+4\Vert \Cut^\perp\Vert \max\left\{\frac{\Vert \dot{H}_S\Vert}{\Gamma_S^2},\frac{\Vert\dot{H}_{\overline{S}}'\Vert}{\Gamma_{\overline{S}}^2}\right\}\right)+9\frac{\Vert\dot{H}_S\Vert^2}{\Gamma_S^3}+\frac{\Vert\ddot{H}_S\Vert}{\Gamma_S^2}\right).
     \end{align*}
      Finally, we seek to eliminate the remaining term that depends on the fact that we are considering the perturbed Hamiltonian $H'$---that is $\Vert H_{\overline{S}}'\Vert$. By definition $\dot{H}_{\overline{S}}' = \dot{H}_{\overline{S}}+(\dot{\mu}-\dot{\overline{\mu}})P_{\overline{S}}$. By the Hellman-Feynman theorem, we find that $\Vert\dot{H}_{\overline{S}}'\Vert = 2\Vert\dot{H}_{\overline{S}}\Vert+\Vert\dot{H}_{S}\Vert$, completing the proof. 
\end{proof}

At last we have assembled all of the necessary ingredients to prove our target result (\cref{lemma:perturbadiab}). The proof is now straightforward and is presented below.
\begin{proof}[Proof of \cref{lemma:perturbadiab}]
    Using \cref{lemma:perturbedwonorms} we can write $D:=\Vert(U'_\perp(1)-V_{\mathrm{ad}}(1))P_\mu(0)\Vert$ as 
    \[D=\bigg\Vert-\frac{1}{iT}U_\perp'(1)\left[[(U_\perp')^\dagger{F}{P}_{\mu} V_{\mathrm{ad}}]_0^1-\int_0^1\mathrm{d}s\left((U_\perp')^\dagger{F} \dot{{P}}_{\mu}{P}_{\mu}V_{\mathrm{ad}}+(U_\perp')^\dagger\dot{F}{P}_{\mu} V_{\mathrm{ad}}\right)\right]\bigg\Vert.\]
    We obtain
    \begin{align*}
        D&\leq \frac{1}{T}\left(\Vert F(0)P_\mu(0)\Vert+\Vert F(1)P_\mu(1)\Vert+\int_{0}^1\mathrm{d}s\left(\Vert F\dot P_\mu P_\mu\Vert + \Vert \dot F P_\mu\Vert \right)\right) \\
        &\leq \frac{1}{T}\left( 2\max_{s\in [0,1]}\Vert F(s)P_\mu(s)\Vert + \max_{s\in[0,1]}\Vert F(P_S-P_\mu)\Vert \Vert \dot P_\mu\Vert + \max_{s\in[0,1]}\Vert \dot F P_\mu\Vert \right).
    \end{align*}
    Each of these three terms can be directly bounded via \cref{cor:Fbounds}. Summing up these bounds and including the maximization over $s$ yields the result. 
\end{proof}

\section{Stoquastic Hamiltonians} \label{s:stoquasticH}
So-called stoquastic Hamiltonians are frequently studied in the context of adiabatic quantum computing. Typically defined as Hamiltonians $H$ with all real, non-positive off-diagonal matrix elements, stoquastic Hamiltonians do not present a sign problem to Monte Carlo simulations.
In the math literature, such matrices have numerous names, such as (Hermitian) $Z$-matrices, negated Metzler matrices, or essentially nonpositive matrices. It is worth noting that a broader class of Hamiltonians are still sign-problem free. This broader class, which consists of any \textit{balanced} Hamiltonian, those with frustration index zero, was proposed to be a better definition of stoquastic in Ref.~\cite{jarret2018hamiltonian}. The same concept has been referred to as Hamiltonians of Vanishing Geometric Phase (VGP) form in Ref.~\cite{Hen2021}. 

Putting aside the confusion that arises from an unfortunate proliferation of definitions, it can unambiguously be stated that for balanced Hamiltonians, an appropriate choice of $S$ allows the Cheeger ratio that appears in our bounds can be both upper- and lower-bounded by the global spectral gap $\gamma$. This allows us to re-express our results in terms of this more directly physical quantity. In particular, the following holds for balanced Hamiltonians ~\cite{jarret2018hamiltonian}:
\begin{equation}\label{eq:stoq_bnd}
    \tilde{\gamma}:= \sqrt{\gamma(2Q+\gamma)} \geq \min_{S\subset V} h
\end{equation}
where $\gamma$ is the (global) spectral gap of $H+\Delta$ and $Q=\max_i \sum_{j\neq i} |H_{ij}|.$ We implicitly defined the upper bound in \cref{eq:stoq_bnd} as $\tilde{\gamma}$, which we note scales like the global gap $\sqrt{\gamma}$ for sparse Hamiltonians (as naturally arise in physical systems). 
As an immediate consequence of \cref{eq:stoq_bnd} we have the following corollary to \cref{thm:mainresult} that applies to stoquastic Hamiltonians when $S$ is taken as the minimum cut. 
\begin{cor}\label{corr:stoq}
    Let $H+\Delta$ be a balanced, one parameter, twice-differentiable family of stoquastic Hamiltonians respecting the grading $\mathcal{H}=S\oplus\overline{S}$ and $\Delta(s)$ be a block-antidiagonal perturbation, where $s:=t/T\in[0,1]$ and for all $s$, $S=\argmin_{S'} h_{S'}$. Then, the difference between the unitary evolution $U(1)$ generated by the Hamiltonian $H(s)+\Cut(s)$ acting for time $T$ and the adiabatic unitary evolution $V_{\mathrm{ad}}(1)$ generated by the block-diagonal Hamiltonian $H(s)$ is bounded as 
    \[
    \norm{(U(1)-V_{\mathrm{ad}}(1))P_{\mu}(0)}\leq \max_{s\in[0,1]} \left[B\sqrt{\tilde\gamma T}+\frac{C}{\eta T}\right], 
    \]
    where $\tilde\gamma$ is defined as in \cref{eq:stoq_bnd}, 
\begin{align*}
B=2\sqrt{1 +  2\sqrt\frac{\kappa}{(1-c)\min\{\Gamma_S,\Gamma_{\overline S}\}} + \sqrt{\frac{\kappa}{\gamma}}\epsilon_T}
\end{align*}
and $C, \epsilon_T, \eta, \kappa,$ and $c$ are as defined in \cref{thm:mainresult}. 

\end{cor}
\begin{proof}
    Simply substitute the bound in \cref{eq:stoq_bnd} into \cref{thm:mainresult}.
\end{proof}
We expect that stronger bounds than \cref{corr:stoq} are possible for stoquastic Hamiltonians as nowhere in our derivation of \cref{thm:mainresult} did we make use of the special properties of such Hamiltonians. In particular, stoquastic Hamiltonians have non-negative ground states, which could potentially be leveraged to prove a stronger version of \cref{lemma:cheegerperturb} .

\section{Examples}\label{s:examples}
In this  section, to match the notation of the physics literature we use standard Dirac (``bra-ket'') notation. We begin by describing a toy example based on a so-called $k$-barbell Hamiltonian that most starkly demonstrates the regimes where our bounds are useful. We then analyze the classical adiabatic Grover search problem~\cite{RolandPRA2002} through the lens of our new results.

\subsection{Barbell Graph}\label{s:barbell-graph}
Our results are most useful in Hamiltonians that induce a ``bottlenecked'' ground state. Although bottlenecked ground states do not require bottlenecked Hamiltonians, they are easiest to understand in such cases and one of the most extreme ones is a $k$-barbell Hamiltonian. This example may seem somewhat contrived physically, but within the context of exploiting quantum dynamics to perform search tasks, cases like these are expected.

Specifically, we construct a Hamiltonian $H=D-A$ whose nonzero off-diagonal terms are coincident with the nonzero terms of the adjacency matrix $A$ of a lopsided, weighted $k$-barbell graph $G(n,k)$ and $D$ is a diagonal matrix specified below. This example consists of two cliques $K_n$ and $K_{3n/4}$ with disjoint vertex sets of $n$ and $\lfloor3n/4\rfloor$ vertices respectively, connected by a weighted path graph $P_k$ of $k$ vertices. All vertex sets $V(K_n),V(K_{3n/4}),$ and $P_k$ are disjoint and there exists a unique edge $e_0$ connecting one end of $P_k$ to $K_n$ and similarly a unique edge $e_1$ connecting the other end of $P_k$ to $K_{3n/4}$. (See \cref{fig:barbell-graph-example}a.) We will take $k$ to be even. The weight function will assign weights $w$ to any edge in the path and weight $1$ elsewhere.
In particular, we have that
\[
    A_{ij} = \begin{cases}
        1 & \text{if $\{i,j\} \in E(K_n)$ or $\{i,j\} \in E(K_{3n/4})$}
        \\w & \text{if $\{i,j\} \in E(P_k)$ or $\{i,j\} = e_0,e_1$}
        \\0 & \text{otherwise.}
    \end{cases}
\]
Because we wish to construct a time-dependent Hamiltonian where adiabatic theory might apply, we consider a simple case where we adjust the diagonal term of marked standard basis states $\ket{m_S} \in V\left(K_n \right)$ and $\ket{m_{\overline{S}}} \in V\left(K_{3n/4} \right)$ where we will linearly decrease the size of the diagonal term. That is, we choose 
\[
    D_{ij} = \begin{cases}
        -s & \text{if $i=j=m_S,m_{\overline{S}}$}
        \\0 & \text{otherwise.}
    \end{cases}
\]
Now, consider the $S$ that cuts the original graph into two identical halves by separating the graph halfway along the path as in \cref{fig:barbell}. $\Delta$ contains only the matrix elements corresponding to the single cut edge in the center of $P_k$ and $\norm{\Delta} = w$. 

The Cheeger ratio $h\in\mathcal{O}(n^{-1}e^{-k})$ bounds the leakage from the local ground state of $S$ and also upper bounds the global gap $\gamma$. Furthermore, $\min_s \Gamma_S=\min_s\Gamma_{\overline{S}}\gg \gamma$. The lopsidedness of the barbell precludes obvious symmetry arguments, though we would anticipate that something like local adiabatic dynamics should govern evolution.

We numerically evaluate our bound for this example with $n=100$, $k=6$, and $w=5$. For simplicity, we assume that $\epsilon_T$ is consistent with ``Theorem''~\ref{sthm:folk_adiabatic}, as the particular adiabatic theorem does not matter for the sake of our comparison. As shown in \cref{fig:barbell-graph-example} our bounds yield the expected behavior: they are effective at ``intermediate'' timescales where neither the local adiabatic error, nor the leakage error are too large. This timescale is much shorter than that specified by the usual adiabatic theorem. 

Also shown in \cref{fig:barbell-graph-example}b, we compare our bound to a naive perturbation theory bound where one applies standard time-dependent perturbation theory to $H+\Delta$ to remove $\Delta$, followed by the standard adiabatic theorem to $H_S$. As we emphasized in the motivating example of the time-independent case this approach suffers from yielding an error term that scales with $\norm{\Delta}$ that, in contrast to our bound, is not suppressed by any factors of $T$. In general, as in this example, $\norm{\Delta}$ can be much larger than $h$, yielding a large gap between the na\'ive bound and our bound. This also leads us to our next example, in which the analysis provides lower bounds to this example equally well with a slight modification to account for the local ``tails.'' 

\begin{figure}\label{fig:barbell}
\centering\includegraphics[width=0.7\textwidth]{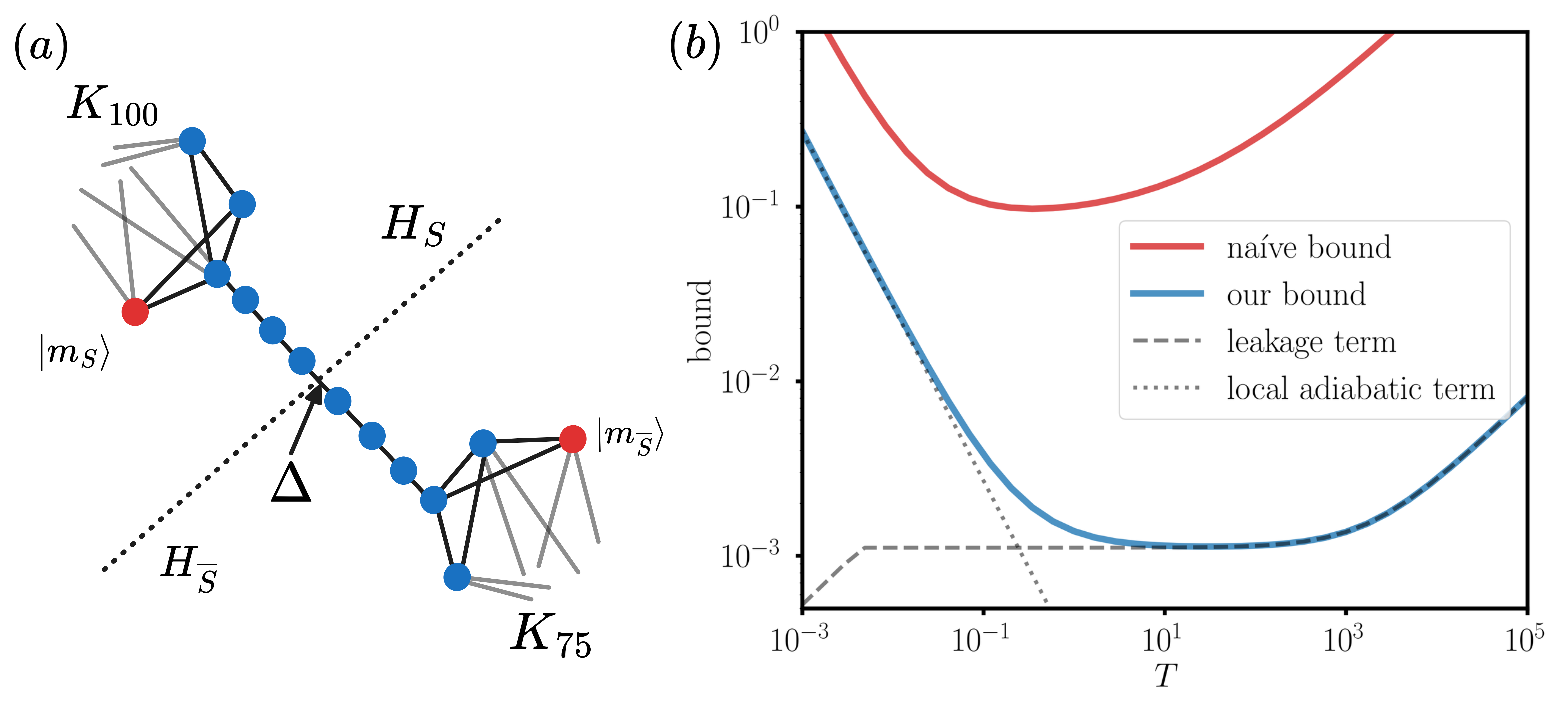}
\caption{Our bounds for the barbell graph example in \cref{s:barbell-graph}. Starting with a Hamiltonian associated with a lopsided, weighted barbell graph as described in the main text with parameters $n=100$, $k=6$, and $w=5$, we linearly turn off a potential on the two marked states $\ket{m_S}, \ket{m_{\overline{S}}}$ over a time $T$. We evaluate our bound as a function of $T$ and, as expected, observe an ``intermediate'' timescale region where neither leakage errors nor local adiabatic errors are overly large. This timescale is significantly less than the standard adiabatic timescale applied to the ground state of this system---to reach the rigorous error bound of $10^{-2}$ obtained by our results the standard adiabatic theorem applied to the full Hamiltonian would specify an interpolation time $T$ vastly larger than our theorem. We also compare our bound to the na\'ive perturbation theory-based bound (shown in red), highlighting the value of our approach. }\label{fig:barbell-graph-example}
\end{figure}

\subsection{Adiabatic Grover Search}\label{s:grover}
As a  second  simple application, we demonstrate how our bounds allow us to analyze analog Grover search beyond the adiabatic regime. In this problem, we seek to reach a so-called marked state---a computational basis state $\ket{m}$ for $m\in\{0,1\}^n$, where $n$ is the number of qubits. To do so, we perform quantum annealing starting from the ground state $\ket{\psi}$ of an initial Hamiltonian $H_0=\id-\ket{\psi}\bra{\psi},$
where $\ket{\psi}:=2^{-n/2}\sum_{w\in\{0,1\}^n}\ket{w}$ is the even superposition over all bit strings, and seek to prepare the ground state of the final Hamiltonian $
H_1=\id-\ket{m}\bra{m}.$

In usual treatments of this problem~\cite{RolandPRA2002,jansen2007bounds}, one considers a general annealing schedule that interpolates the Hamiltonian from $H_0$ to $H_1$ as a function of $s:=\frac{t}{T}\in[0,1]$. Here, for technical reasons that make the analysis slightly easier we will consider the reverse (but, equivalent) problem of interpolating from $H_1$ to $H_0$ to prepare the state $\ket{\psi}$, starting from $\ket{m}$. Our analysis will proceed similarly to that of Ref.~\cite{RolandPRA2002} for an entirely adiabatic schedule. In particular, consider an annealing schedule specified by a function $f(s)$ that satisfies the differential equation $\dot{f}(s)=kg^p(f(s)) $
with $f(0)=0$, where $p\in(1,2)$ and $k:=\int_0^1 du\, g^{-p}(u)$ are constants. Letting $N:=2^{n}$, 
\begin{equation}\label{eq:gap}
g(\tilde s):=\sqrt{1-4\cdot \frac{N-1}{N}\tilde{s}(1-\tilde{s})},
\end{equation}
is the eigenvalue gap for the corresponding time-dependent Hamiltonian given by
\begin{align}\label{eq:Hgrover}
&H(s)=(1-f(s))H_1+f(s)H_0\nonumber \\
&=I-\left(1-\frac{f(s)(N-1)}{N}\right)\ket{m}\bra{m}-\frac{f(s)(N-1)}{N}\ket{\perp}\bra{\perp} -\frac{f(s)\sqrt{N-1}}{N}\left(\ket{\perp}\bra{m}+\ket{m}\bra{\perp}\right),
\end{align}
where $\ket{\perp}:=\frac{1}{\sqrt{N-1}}\sum_{j\neq m}\ket{j}$. The second line indicates that the dynamics in this problem are confined to a two dimensional subspace given by $\mathrm{Span}(\{\ket{m}, \ket{\perp}\})$. 

To analyze this problem for faster-than-adiabatic schedules, we consider (1) applying the standard adiabatic theorem for $s=0$ to some $s_-$, (2) applying our theorem for $s=s_-$ to some $s_+$, where the standard adiabatic theorem does not hold. This allows us to rigorously understand leakage from the subspace $S$ given by the marked state $\ket{m}\bra{m}$ to $\overline{S}$ (which due to the effective two dimensional dynamics is just $\ket{\perp}\bra{\perp}$). Note that we must only bound the errors arising due to leakage, not due to transitions to higher excited states in the subspace $S$. This significant simplification is because there is a single state in the subspace $S$. This feature of the analog Grover search problem makes it a mathematically convenient toy scenario to see how our bounds allow one to understand annealing beyond the adiabatic regime. 

For step (1), we simply bound the standard adiabatic error.  For this purpose we use Theorem 3 of Ref.~\cite{jansen2007bounds}, but we note that very similar statements can be found in earlier works, see i.e. the lecture notes in Ref.~\cite{teufel2003adiabatic}.
The theorem is

restated here in a less general form for convenience:

\begin{theorem}[Theorem 3 of Ref.~\cite{jansen2007bounds}]\label{thm:JRS}
    Suppose the ground state of a Hamiltonian H(s) is separated by a gap $g(s)$ from the rest of the spectrum of $H(s)$ and $H, \dot{H},$ and $\ddot{H}$ are bounded operators. Then, the adiabatic error between the projector $P_T(s)$ onto the state during an adiabatic anneal of time $T$ and the projector $P(s)$ onto the ground state of $H(s)$ is bounded as $\norm{P_T(s)-P(s)}\leq A(s)$ where
    \[
    A(s)=\frac{1}{T}\, \frac{\norm*{\dot{H}}}{g^2(f(\tau))}\Bigg |_{\mathrm{u.b.}}+\frac{1}{T}\int_0^s\left(\frac{\norm*{\ddot{H}}}{g^2(f(\tau))}+7\frac{\norm*{\dot{H}}^2}{g^3(f(\tau))}\right)d\tau,
    \]
    and $a|_{\mathrm{u.b.}}:=a(\tau=0)+a(\tau=s)$.
\end{theorem}

To apply the theorem, we differentiate \cref{eq:Hgrover} to yield\footnote{We note that Ref.~\cite{jansen2007bounds} was missing an (ultimately inconsequential) factor of $p$ in $\ddot{H}$ in their analysis of this problem.}
\begin{align}
&\dot{H}=k g^p(f(s))(\ket{m}\bra{m}-\ket{\psi}\bra{\psi}) &\implies &&\norm*{\dot{H}}\leq kg^p(f(s)), \\
&\ddot{H}=p k^2 g^{2p-1}(f(s))\dot{g}(f(s))(\ket{m}\bra{m}-\ket{\psi}\bra{\psi}) &\implies &&\norm*{\ddot{H}}\leq pk^2g^{2p-1}(f(s))\dot{g}(f(s)),
\end{align}
where we use that the norm of the difference of two projectors is upper bounded by one. Plugging into \cref{thm:JRS} we obtain an adiabatic error at $s=s_-$ of
\begin{align}\label{eq:adiabatic-error}
A(s_-)&=\frac{k}{T}\left[g^{p-2}(0)+g^{p-2}(f(s_-))+k\int_0^{s_-}g^{2p-3}(f(\tau))\left[\dot{g}(f(\tau))p+7\right]d\tau \right]\nonumber \\
&=\frac{k}{T}\left[1+g^{p-2}(\tilde{s}_-)+\int_0^{\tilde{s}_-}g^{p-3}(u)\left[\dot{g}(u)p+7\right]du \right]\nonumber \\
&=\frac{k}{T}\left[1+g^{p-2}(\tilde{s}_-)+\frac{p}{2-p}\left(g^{p-2}(\tilde{s}_-)-1\right)+7\int_0^{\tilde{s}_-}g^{p-3}(u)du \right]
\end{align}
where in the second line we use a change of variables $u=f(\tau)$ and let $\tilde{s}_-=f(s_-)$. 
To evaluate the scaling of \cref{eq:adiabatic-error}, note that $k=\mathcal{O}(g^{1-p}(1/2))=\mathcal{O}(N^{(p-1)/2})$, as the function $g$ is minimized with argument $1/2$ and $p>1$. Furthermore, assume that $\int_0^{\tilde{s}_-}g^{p-3}(u)du = \mathcal{O}\left(g^{p-2}(\tilde{s}_-)\right)$. While this assumption can be proved directly, it is simpler to demonstrate it is true once we determine what $\tilde{s}_-$ must be. 

Let us suppose that $T=\Theta(g^{x-1}(1/2))=\Theta(N^{(1-x)/2})$ for some (possibly $p$-dependent) constant $0<x<1-\frac{p}{2}$. Note, $x>0$ implies better than $\Theta(N^{1/2})$ runtime for Grover search. We will show, via our bounds, that this will yield an $o(1)$ success probability implying that our results give a tight lower bound on Grover search, as $\Theta(N^{1/2})$ scaling is optimal. With such a $T$, $\frac{k}{T}=\mathcal{O}(N^{(x+p)/2-1})=o(1)$.

We require the adiabatic error $A(s_-)=o(1)$, so it must hold that $g^{p-2}(\tilde{s}_-)=o(N^{1-(x+p)/2)})$, or, equivalently, that $g(\tilde{s}_-)=\omega\left(N^{\frac{x+p-2}{2(2-p)}}\right)=\omega\left(N^{\frac{x}{2(2-p)}-\frac{1}{2}}\right)$. This can be accomplished by picking $\tilde{s}_-$ such that $g(\tilde{s}_-)=N^{(\alpha-1)/2}$ for some constant $1>\alpha>\frac{x}{2-p}$. Inverting this and using \cref{eq:gap} implies that the $\tilde{s}$ satisfying this condition are given by
\begin{equation}\label{eq:tildes}
\tilde{s}_\pm=\frac{1\pm\sqrt{1-\frac{N-N^\alpha}{N-1}}}{2}.
\end{equation}
Given this explicit expression for $\tilde{s}_-$ one can show that, indeed, $\int_0^{\tilde{s}_-} g^{p-3}(u) du=\mathcal{O}(g^{p-2}(\tilde{s}_-))$. In particular,
\begin{align}\label{eq:integralofgap}
\int_0^{\tilde{s}_-} g^{p-3}(u) du &= \int_{-\frac{1}{2}}^{\tilde{s}_--\frac{1}{2}} dv \left(\frac{1}{N}+4\frac{N-1}{N}v^2\right)^{\frac{p-3}{2}} \nonumber \\
&=\left[vN^{\frac{3-p}{2}}\,_2F_1\left(\frac{1}{2},\frac{3-p}{2};\frac{3}{2}, -4(N-1)v^2\right)\right]_{-\frac{1}{2}}^{\tilde{s}_--\frac{1}{2}}
\end{align}
where, in the second line we used the change of variables $v=\tilde{s}_--\frac{1}{2}$ and $_2F_1(a,b;c;x)$ is the Gauss hypergeometric function. From \cref{eq:tildes}, $\tilde{s}_--\frac{1}{2}=-\frac{1}{2}N^{\frac{\alpha-1}{2}}+\mathcal{O}(N^{-\frac{1}{2}})$, so, asymptotically, we must only consider the $s=-\frac{1}{2}$ boundary of \cref{eq:integralofgap}. That is, asymptotically in $N$, we have that
\begin{align}
\int_0^{\tilde{s}_-} g^{p-3}(u) du &=\mathcal{O}\left(N^{\frac{3-p}{2}}\,_2F_1\left(\frac{1}{2},\frac{3-p}{2};\frac{3}{2}, 1-N\right)\right)\nonumber \\
&= \mathcal{O}\left(N^{\frac{3-p}{2}}\left[\left(\frac{\sqrt{\pi}\,\gamma\left(1-\frac{p}{2}\right)}{2\, \gamma\left(\frac{3-p}{2}\right)}\right)\frac{1}{\sqrt{N}} +\mathcal{O}\left(N^{\frac{p-3}{2}}\right)\right]\right)=\mathcal{O}\left(N^{1-\frac{p}{2}}\right)
\end{align}
where the second line comes from the asymptotics of $_2F_1\left(\frac{1}{2},b;\frac{3}{2};x\right)$ for $x\rightarrow -\infty$ and $\frac{1}{2}<b<1$. As we picked $\tilde{s}_-$ such that $g^{p-2}(\tilde{s}_-)=N^{\frac{(\alpha-1)(p-2)}{2}}=\Omega\left(N^{1-\frac{p}{2}}\right)$ (given that $0<\alpha<1$), we have shown the desired fact that, for such a $\tilde{s}_-$, $\int_0^{\tilde{s}_-} g^{p-3}(u) du=\mathcal{O}(g^{p-2}(\tilde{s}_-))$. Therefore, if we pick such a $\tilde{s}_-$, the standard adiabatic theorem ensures that at $s= s_-$ the state is $\ket{\lambda(s_-)}+o(1)$.

We now estimate the leakage error.  While it is not necessary to obtain a bound that demonstrates the high probability of missing our target, this approach enables us to show that, on the appropriate timescale, no constant is large enough to guarantee convergence. In fact, the probability of hitting our target decreases with scale. Between $s_-$ and $s_+$, as previously stated, we only have to worry about leakage error, so application of our theorem is relatively straightforward. In fact, relatively drastic additional simplifications are possible. It follows directly from \cref{eq:Hgrover} that $\Delta=\frac{f(s)(N-1)}{N}(\ket{m}\bra{\perp}+\ket{\perp}\bra{m})$, $\mathcal{M}=\ket{m}\bra{m}+\ket{\perp}\bra{\perp}$, and $\Delta^\perp=0$. Therefore, looking at \cref{eq:MdeltaMterm} in \cref{lemma:cheegerperturb} we see that every term in the second line except the $\mathcal{M}\Delta\mathcal{M}$ term vanishes. By \cref{lem:expected_norm}, $\norm{\mathcal{M}\Delta\mathcal{M}}\leq h$, and, as we are applying our method from ${s}_-$ to ${s}_+$, we are left with a leakage error bounded by $\sqrt{T \int_{s_-}^{s_+} h(\tilde{s}) ds }$. It holds that 
\begin{align}
\int_{s_-}^{s_+} h(\tilde{s}) ds= \int_{\tilde{s}_-}^{\tilde{s}_+}\mathrm{d}\tilde{s}\frac{\mathrm{d}s}{\mathrm{d}\tilde{s}}h(\tilde{s}) \nonumber\leq \left[\int_{\tilde{s}_-}^{\tilde{s}_+}\mathrm{d}\tilde{s}\frac{\mathrm{d}s}{\mathrm{d}\tilde{s}}\right]&\left[\int_{\tilde{s}_-}^{\tilde{s}_+}\mathrm{d}\tilde{s}h(\tilde{s})\right]    = \left(s_+-s_-\right)\int_{\tilde{s}_-}^{\tilde{s}_+}\mathrm{d}\tilde{s}h(\tilde{s})\nonumber
    \\&\quad\leq \left(s_+-s_-\right)\int_{\tilde{s}_-}^{\tilde{s}_+}\mathrm{d}\tilde{s}g(\tilde{s}) \nonumber
    =\mathcal{O}\left(N^{\alpha-1}\right),
\end{align}
where, in the next-to-last line, we use that one can bound $h(\tilde{s})\leq g(\tilde{s})$. Using the fact that dynamics are restricted to a two dimensional subspace this is relatively straightforward to demonstrate by explicit computation of the Cheeger constant, but this is shown to be a more general inequality for a broader class of problems in Theorem 7 of Ref.~\cite{jarret2018quantum}. This is tighter than the more general inequality in \cref{eq:stoq_bnd}. In the last line, we use that, trivially, $(s_+-s_-)=\mathcal{O}(1)$, $g(\tilde{s})=\mathcal{O}(g(\tilde{s}_-))=\mathcal{O}(N^{\frac{\alpha-1}{2}})$ for $\tilde{s}\in[\tilde{s}_-,\tilde{s}_+]$, and $\tilde{s}_+-\tilde{s}_-=\sqrt{1-\frac{N-N^\alpha}{N_1}}=\mathcal{O}(N^{\frac{\alpha-1}{2}})$ to obtain an upper bound.

Thus, the leakage error if $T=\Theta(N^{(1-x)/2})$ is $\mathcal{O}\Big(N^{\frac{1}{2}\left(\alpha-\frac{x+1}{2}\right)}\Big).$ This is $o(1)$ if $\alpha< \frac{x+1}{2}.$ From the bounds on $\alpha$ in its definition this holds if and only if $\frac{x}{2-p}< \frac{x+1}{2}$ $\implies$ $x< \frac{2-p}{p}.$ For arbitrary $0<x<1,$ there exists a $1<p<2$ such that this is true.

Therefore, given these parameter choices, which allow us to have rigorous control over the error, at ${s}_+$ the state is still $\ket{\lambda(s_-)} + o(1)$. However, one can show (and we do in the supplemental material, for completeness) that $|\bra{\lambda(s_-)}\lambda(s_+)\rangle|=o(1)$, so we cannot apply the standard adiabatic theorem for $s>s_+$ to prepare the ground state of $H(0)$. That is, for $T=\mathcal{O}(N^{\frac{1-x}{2}})$ when $x>0$, we have no way to rigorously guarantee a successful annealing procedure. This is consistent with the fact that the Grover speedup is optimal.

\section{Discussion}\label{s:discussion}
The Grover search example points to a more general qualitative feature of our bounds: they allow us to rigorously understand annealing beyond the adiabatic regime by bounding the effects of leakage between $S$ and $\overline{S}$. However, ensuring that leakage is small means that one can only guarantee a successful annealing schedule if there is sufficient overlap between the ground state prior to the avoided crossing (when the evolution is still adiabatic) and after the avoided crossing (when the evolution becomes adiabatic again). As we have shown, this is not the case for the Grover problem, but there is experimental and numerical evidence showing that, in some other problems, one can move quickly through an avoided crossing and still maintain a constant success probability. (e.g. Refs.~\cite{crosson2021prospects,ebadi2022quantum,cain2023quantum,braida2023anti}.)
We conjecture that this feature of already having sufficient overlap with the final ground state prior to an avoided crossing is, in fact, the only generic route towards faster-than-adiabatic annealing. That is, we suspect that successful annealing beyond the adiabatic regime that does not follow mechanisms similar to that considered here must, in some sense, depend on an uncommonly ``lucky'' problem instance. This conjecture is falsifiable: if, indeed, the success of certain faster-than-adiabatic schedules can be described by our theorem, one could calculate a slower, but still non- (globally) adiabatic timescale where the leakage error becomes large and, consequently, where we expect the success probability should drop. 

Such conjectures about the extent of the conceptual reach of our theorem aside, we have demonstrated that at timescales short of the adiabatic limit, local dynamics can dominate over global dynamics. Thus, quantum annealing beyond the adiabatic regime can, at least in some cases, be understood in terms of these local dynamics.

As a final point, Ref.~\cite{de2022locobatic} considers a disordered (random) Hamiltonian $H(s)$ acting on a finite lattice and demonstrate what they dub a ``locobatic theorem''---namely, they show that, under certain assumptions, the evolution of an eigenvector $\psi(s)$ of $H(s)$ stays close to the instantaneous eigenvector of the restriction of H(s) to a local region of support of the initial eigenvector $\psi(0)$. Here, the presence of disorder in the corresponding quantum system leads to spectral localization (e.g. in the Anderson model). Importantly, the assumptions that lead to our results do not depend on disorder, enabling their application to adiabatic quantum computation. Despite this, the qualitative features of the results in Ref.~\cite{de2022locobatic} are similar to our own, further highlighting the importance of localized adiabatic behavior for understanding globally non-adiabatic dynamics.

\subsection{Open questions and future work}
There are a number of extensions and improvements we desire out of our analysis in order to yield a theorem of maximum utility in the intended situations. Although we have shown that, in fact, the theorem does provide us with lower bounds on computational problems of interest, we would also like to be able to completely analyze behavior of the situations alluded to in the previous discussion. Some obvious next steps include:

\paragraph{A simplified proof.} We anticipate that a simplified proof of some of the claims of this paper can be obtained by working directly with unitaries and some additional perturbation bounds. Avoiding direct bounds through the resolvent formalism may simplify matters, but will require appropriate tricks in order to maintain or improve the tightness of our bounds.

\paragraph{Improving dependence upon $\Gamma_{\overline{S}}$ and $\kappa$.} We anticipate that the dependency of our bounds upon $\Gamma_{\overline{S}}$ are an artifact of analysis. We expect that any gap dependency can be ultimately (or at least largely) restricted to the local evolution within the set $S$. The dependency upon $\kappa$ seems harder to eliminate, although it also appears to be an artifact of analysis. Even if this term cannot be eliminated, reducing its importance would improve the strength and simplicity of our results.

\paragraph{Applications.} In this paper we offered one compelling application as an example, a physical speed limit on adiabatic Grover search. We expect that we can also turn our analysis to a number of well-known no-go results and demonstrate that adiabatic or annealing versions of these protocols will also fail for physical reasons.

\paragraph{Higher order Cheeger inequalities.} A large body of recent work on Cheeger inequalities has shown that one can define higher order Cheeger inequalities to obtain bounds on different eigenvalues within the spectrum \cite{lee2014multi}. Such inequalities have not been generalized to the case of spectral gaps of Hamiltonians and the nature of such a generalization is not obvious. Nonetheless, such inequalities might allow us better control and the ability to avoid utilizing a ``global gap'' between the low and high energy subspaces.

\paragraph{Time-dependent Cheeger ratios and cuts.} To the authors' knowledge, no work has been done on a time-dependent version of $h$ where the subset $S$ is also time-dependent. This is because the definition of $h$ can have sharp transitions over which set achieves the infimum as $H$ varies. We expect that stronger inequalities for deriving upper and lower bounds will require that $S$ be allowed to vary in time, such that the set $S$ can be implicitly defined based on properties of the ground state. (For instance, the Cheeger constant minimizes $h_S$ over all choices of $S$ and the $S$ that achieves the infimum will depend on time and may not be unique.) Understanding a version of our theorem in terms of appropriate time-dependent quantities that can track a changing $S$ will allow us to better understand the dynamics of how the ``bulk'' of a distribution moves throughout Hilbert space under continuous variations. Such adaptations would also seemingly greatly simplify proofs of lower bounds like in the previous section.

\paragraph{Acknowledgements.} We thank Michael J. O'Hara and Andrew Glaudell for useful discussions. This material is based upon work supported
by the Army Research Office under Contract No. [W911NF-20-C-0025]. J.B.~was supported in part by the U.S.~Department of Energy, Office of Science, Office of Advanced Scientific Computing Research, Department of Energy Computational Science Graduate Fellowship (award No.~DE-SC0019323) and in part by the National Science Foundation under Grant No. NSF PHY-1748958, the Heising-Simons Foundation, and the Simons Foundation (216179, LB) and in part by the Harvard Quantum Initiative. J.B. and T.C.M. acknowledge funding by the U.S. Department of Energy (DOE) ASCR Accelerated Research in Quantum Computing program (award No.~DE-SC0020312), DoE QSA, NSF QLCI (award No.~OMA-2120757), DoE ASCR Quantum Testbed Pathfinder program (award No.~DE-SC0019040), NSF PFCQC program, AFOSR, ARO MURI, AFOSR MURI, and DARPA SAVaNT ADVENT.

\bibliography{main.bib}
\bibliographystyle{siam}

\setcounter{secnumdepth}{1}
\setcounter{section}{0}
\renewcommand{\thesection}{S\arabic{section}}
\setcounter{theorem}{0}
\renewcommand{\thetheorem}{S\arabic{theorem}}
\setcounter{lemma}{0}
\renewcommand{\thelemma}{S\arabic{lemma}}
\setcounter{equation}{0}
\renewcommand{\theequation}{S\arabic{equation}}
\setcounter{table}{0}
\renewcommand{\thetable}{S\arabic{table}}
\setcounter{figure}{0}
\renewcommand{\thefigure}{S\arabic{figure}}

\center{\Large\textbf{Supplemental Material}}
\vspace{1em}

\raggedright
In this supplemental material, we provide technical details for a few proofs that serve important, but ancillary, roles in the derivations of the main text. In particular, in \cref{s:resolvent-proofs} we provide some additional lemmas and ultimately prove Lemma 10 of the main text, which is used to bound the difference between $U_\perp'(1)$ and $V_\mathrm{ad}(1)$ in the main text. In \cref{app:gapbnd} we provide a derivation of the fact that $|\bra{\lambda(s_-)}\lambda(s_+)\rangle|=o(1)$ in the Grover search example in Section 7b of the main text.

\section{Additional Lemmas for the Perturbed Adiabatic Theorem}\label{s:resolvent-proofs}
Recall that the quantity $F$ is defined in Eq. (5.17) of the main text as
\begin{equation}\label{def:F-s}
   F:=R_\perp \dot{P}_{\mu}P_{\mu}+P_{\mu}\dot{P}_{\mu}R_\perp.
\end{equation}
and, from Eq. (5.14) in the main text,
\begin{equation}\label{fact:ResolventIdentity-s}
    L := (I+R\Cut^\perp).
\end{equation}
We refer the reader to the main text for the definitions of other quantities. Armed with this definition, we prove a few additional lemmas.

\begin{prop}
    Let $F$ be defined as above in \cref{def:F}. It holds that 
    \[F = L^{-1}R\dot{P}_\mu P_\mu + P_\mu\dot{P}_\mu L^{-1}R\]
    and 
    \[\dot{F}P_\mu =-L^{-1}\left(\dot\Cut^\perp R+\Cut^\perp\dot{R}\right)L^{-1}R\dot{P}_\mu P_\mu + L^{-1}\dot{R}\dot{P}_\mu P_\mu + L^{-1}{R}\ddot{P}_\mu P_\mu  - P_\mu\dot{P}_\mu L^{-1}{R}\dot{P}_\mu.\]
\end{prop}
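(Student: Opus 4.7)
My plan is to establish the two identities in turn. The first is a direct substitution using the previously-derived relation $R_\perp = L^{-1}R$, while the second requires applying the product rule to $F$ and then using several cancellation identities.

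For the first identity, I would substitute $R_\perp = L^{-1}R$ directly into the definition $F = R_\perp \dot P_\mu P_\mu + P_\mu \dot P_\mu R_\perp$, yielding $F = L^{-1}R \dot P_\mu P_\mu + P_\mu \dot P_\mu L^{-1}R$. That is all.

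For the second identity, I would begin by applying the product rule to $F$, which produces six summands:
\[
\dot F = \dot R_\perp \dot P_\mu P_\mu + R_\perp \ddot P_\mu P_\mu + R_\perp \dot P_\mu \dot P_\mu + \dot P_\mu \dot P_\mu R_\perp + P_\mu \ddot P_\mu R_\perp + P_\mu \dot P_\mu \dot R_\perp.
\]
Next I would right-multiply by $P_\mu$ and invoke two key facts: (i) $RP_\mu = 0$ (which follows from $R(P_\mu + P_{\overline\mu}) = 0$, in turn immediate from the definition of $R$), and consequently $R_\perp P_\mu = L^{-1}RP_\mu = 0$; and (ii) differentiating $P_\mu^2 = P_\mu$ gives $\dot P_\mu P_\mu = (I-P_\mu)\dot P_\mu$, so that $\dot P_\mu \dot P_\mu P_\mu = P_\mu \dot P_\mu \dot P_\mu$. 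Fact (i) immediately kills the fourth and fifth summands, while facts (i) and (ii) together kill the third since $R_\perp \dot P_\mu \dot P_\mu P_\mu = R_\perp P_\mu \dot P_\mu \dot P_\mu = 0$. Therefore
\[
\dot F P_\mu = \dot R_\perp \dot P_\mu P_\mu + L^{-1} R \ddot P_\mu P_\mu + P_\mu \dot P_\mu \dot R_\perp P_\mu.
\]

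Finally, I would compute $\dot R_\perp$ from $R_\perp = L^{-1}R$ via $\frac{d}{ds}L^{-1} = -L^{-1}\dot L\, L^{-1}$, where $\dot L$ is obtained by differentiating $L = I + R\Cut^\perp$; this gives $\dot R_\perp = L^{-1}\dot R - L^{-1}\dot L\, L^{-1} R$. Substituting this into the three surviving terms: the first summand yields $L^{-1}\dot R \dot P_\mu P_\mu - L^{-1}\dot L\, L^{-1} R \dot P_\mu P_\mu$, which matches the first two terms of the claimed expression; the second summand $L^{-1}R\ddot P_\mu P_\mu$ is already in the desired form; and in the third summand $P_\mu \dot P_\mu \dot R_\perp P_\mu$, the $L^{-1}\dot L\, L^{-1}R P_\mu$ piece vanishes again by $RP_\mu = 0$, leaving $P_\mu \dot P_\mu L^{-1}\dot R P_\mu$. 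A final application of the identity $\dot R P_\mu = -R \dot P_\mu$ (obtained by differentiating $RP_\mu = 0$) converts this into $-P_\mu \dot P_\mu L^{-1} R \dot P_\mu$, matching the last term of the claim.

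The main obstacle is purely organizational bookkeeping: the product rule generates many terms, and one must apply the vanishing identities $RP_\mu = 0$ and $\dot P_\mu P_\mu = (I-P_\mu)\dot P_\mu$, together with $\dot R P_\mu = -R \dot P_\mu$, at the correct intermediate stages to collapse everything to the four terms of the claim. No analytic idea beyond the product rule and the already-derived second resolvent identity is required.
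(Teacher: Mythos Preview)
Your proposal is correct and follows essentially the same route as the paper's (very terse) proof: substitute $R_\perp = L^{-1}R$ for the first identity, then differentiate and use $RP_\mu = 0$ (and its consequence $\dot R P_\mu = -R\dot P_\mu$) together with the projector identity $P_\mu\dot P_\mu P_\mu = 0$ to collapse the product-rule expansion. Your write-up simply spells out the cancellations that the paper compresses into the single phrase ``using the fact that $RP_\mu=0$, eliminating the second term''; note also that your $\dot L = \dot R\Delta^\perp + R\dot\Delta^\perp$ agrees with the paper's displayed formula for $\dot R_\perp$ just above the proposition, even though the proposition statement itself writes the factors in the other order.
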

\begin{proof}
    The first equality follows directly from the definition of $F$ and \cref{fact:ResolventIdentity-s}. The second comes immediately from differentiating the first and using the fact that $RP_\mu=0$, eliminating the second term.
\end{proof}

\begin{lemma} Let $H',\Cut^\perp, F$ and $P_{\mu}$ be as defined above. Then,
    \[
    [H'+\Cut^\perp, F] =[{\dot{P}}_{\mu},P_{\mu}].
    \]\label{lem:HFPmuPmubar}
\end{lemma}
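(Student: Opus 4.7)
The plan is to expand the commutator using the defining property of the reduced resolvent $R_\perp$, namely that $(H'+\Delta^\perp - \mu I)R_\perp = R_\perp(H'+\Delta^\perp - \mu I) = I - P_\mu - P_{\overline{\mu}}$ from \cref{eq:resolvent_fact} (applied to $H'+\Delta^\perp$ in place of $H'$). Since subtracting a multiple of the identity doesn't change the commutator, I would replace $H'+\Delta^\perp$ with $H'+\Delta^\perp - \mu I$ throughout. The two key algebraic facts I will need are: (i) $(H'+\Delta^\perp)P_\mu = \mu P_\mu$, which holds because $H'P_\mu = \mu P_\mu$ (the projector $P_\mu$ lands in the ground eigenspace of $H_S$ with eigenvalue $\mu$, and $H'$ restricted to $S$ equals $H_S$), and $\Delta^\perp P_\mu = 0$ by the very definition of $\Delta^\perp$; and (ii) $P_\mu$ has range in $S$ and kernel containing $\overline{S}$, so taking the $s$-derivative of $P_{\overline{S}}P_\mu = P_\mu P_{\overline{S}} = 0$ shows $\dot{P}_\mu$ is also supported on $S$, giving $P_{\overline{\mu}}\dot{P}_\mu = \dot{P}_\mu P_{\overline{\mu}} = 0$.

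The main computation then splits into two pieces. For $(H'+\Delta^\perp - \mu I)F$, the first summand $(H'+\Delta^\perp - \mu I)R_\perp \dot{P}_\mu P_\mu$ collapses to $(I - P_\mu - P_{\overline{\mu}})\dot{P}_\mu P_\mu$; using the standard projector identity $P_\mu \dot{P}_\mu P_\mu = 0$ (obtained by differentiating $P_\mu^2 = P_\mu$ and sandwiching with $P_\mu$) together with $P_{\overline{\mu}}\dot{P}_\mu = 0$, this reduces to $\dot{P}_\mu P_\mu$. The second summand $(H'+\Delta^\perp - \mu I)P_\mu \dot{P}_\mu R_\perp$ vanishes by fact (i). Symmetrically, for $F(H'+\Delta^\perp - \mu I)$, the first summand vanishes because $P_\mu(H'+\Delta^\perp - \mu I) = 0$, and the second summand $P_\mu \dot{P}_\mu R_\perp(H'+\Delta^\perp - \mu I) = P_\mu \dot{P}_\mu(I - P_\mu - P_{\overline{\mu}})$ collapses to $P_\mu \dot{P}_\mu$ using the same two annihilation identities.

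Subtracting the two yields
\[
[H'+\Delta^\perp, F] = \dot{P}_\mu P_\mu - P_\mu \dot{P}_\mu = [\dot{P}_\mu, P_\mu],
\]
as desired. There is no substantive obstacle here: the entire proof is bookkeeping organized around the resolvent identity plus the block structure of $\Delta^\perp$. The only subtlety worth highlighting in the write-up is the verification that $\dot{P}_\mu$ is supported on $S$, since this is what kills all the $P_{\overline{\mu}}$ terms and lets the commutator condense so cleanly. Everything else (the projector-derivative identity, $\Delta^\perp P_\mu = 0$, the action of $H'$ on $P_\mu$) is either definitional or a one-line calculation.
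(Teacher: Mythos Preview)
Your proposal is correct and follows essentially the same route as the paper's own proof: expand the commutator, shift by $\mu I$, invoke the resolvent identity $(H'+\Delta^\perp-\mu I)R_\perp = I-P_\mu-P_{\overline\mu}$, and simplify using $P_\mu\dot P_\mu P_\mu=0$. If anything, your write-up is more explicit than the paper's, which silently passes from $(I-P_\mu-P_{\overline\mu})\dot P_\mu$ to $P_\mu^\perp\dot P_\mu$; your remark that $\dot P_\mu$ is supported on $S$ (so $P_{\overline\mu}\dot P_\mu=0$) is exactly the justification the paper leaves implicit.
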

\begin{proof}
    Simply plugging in $F$ from its definition in \cref{def:F}, noting that $[\mu I, F] = 0$, and using that $(H'+\Cut^\perp-\mu I )R_\perp = I-P_\mu-P_{\overline{\mu}},$ we can proceed by direct computation:
    \begin{align*}    
    [H'+&\Cut^\perp, F] = [H'+\Cut^\perp, R_{\perp}\dot{P}_{\mu}P_{\mu} + P_{\mu}\dot{P}_{\mu}R_{\perp}]\\
    &= (H'+\Cut^\perp )R_{\perp}\dot{P}_{\mu}P_{\mu} + (H'+\Cut^\perp){P_{\mu}}\dot{P_{\mu}}R_{\perp} - R_{\perp}\dot{P_{\mu}}P_{\mu} (H'+\Cut^\perp) - P_\mu\dot{P}_\mu R_{\perp}(H'+\Cut^\perp)\\
    &= (H'+\Cut^\perp - \mu I)R_\perp \dot{P}_{\mu}P_{\mu}-P_{\mu}\dot{P}_{\mu}R_\perp(H'+\Cut^\perp-\mu I)\\
    &= {P_{\mu}}^\perp \dot{P}_{\mu}P_{\mu} - P_{\mu}\dot{P}_{\mu}P_{\mu}^\perp \\
    &= \dot{P}_{\mu}P_{\mu}-P_{\mu}\dot{P}_\mu= [\dot{P}_{\mu},P_{\mu}].
    \end{align*}
\end{proof}
Lemma \ref{lem:HFPmuPmubar} allows us to prove the following result, which is stated as Lemma 10 in the main text.

\begin{lemma} \label{lemma:UperpPmuPmubarUperpsupp}
Let $U_\perp',$ ${P}_{\mu},$ and $F$ be as defined above. Then, 
    \[(U_\perp')^{\dagger}[\dot{{P}}_{\mu},{P}_{\mu}]U_\perp' = \frac{1}{iT}\left(\dot{\tilde{F}}-(U_\perp')^{\dagger}\dot{F}U_\perp'\right),\] where $\tilde{F}:=(U_\perp')^{\dagger} FU_\perp'.$
\end{lemma}
\begin{proof}
    Plugging in the result of \cref{lem:HFPmuPmubar} we get that
    \begin{align*}
        (U_\perp')^{\dagger}[\dot{{P}}_{\mu},{P}_{\mu}]U_\perp' &= (U_\perp')^{\dagger}(H'+\Delta^\perp) F U_\perp' -  (U_\perp')^{\dagger} F(H'+\Delta^\perp)U_\perp' \\
        &=\frac{1}{iT}\left((\dot{U}'_\perp)^\dagger F U_\perp' +(U_\perp')^{\dagger} F \dot{U}_\perp'\right)\\
        &=\frac{1}{iT}\left(\dot{\tilde{F}}-(U_\perp')^{\dagger}\dot{F}U_\perp'\right),
    \end{align*}
    where in the second line we used Eq.~(2.4) of the main text and $\tilde{F}:=(U_\perp')^{\dagger} FU_\perp'$ as in the lemma statement.
\end{proof}

\section{Details for the Grover Search Example}\label{app:gapbnd}
In the main text, we claimed that  $|\bra{\lambda(s_-)}\lambda(s_+)\rangle|=o(1)$. Here, we demonstrate this fact. To begin, we use that the dynamics are confined to a two dimensional subspace, allowing us to diagonalize the Grover Hamiltonian
\begin{align}\label{eq:Hgroversupp}
&H(s)=(1-f(s))H_1+f(s)H_0\nonumber \\
&=I-\left(1-\frac{f(s)(N-1)}{N}\right)\ket{m}\bra{m}-\frac{f(s)(N-1)}{N}\ket{\perp}\bra{\perp} -\frac{f(s)\sqrt{N-1}}{N}\left(\ket{\perp}\bra{m}+\ket{m}\bra{\perp}\right),
\end{align}
in the $\{\ket{m},\ket{\perp}\}$ basis.  Recall this Hamiltonian has an eigenvalue gap given by Eq.~(7.1) of the main text as
\begin{equation}\label{eq:gapsup}
g(\tilde s):=\sqrt{1-4\cdot \frac{N-1}{N}\tilde{s}(1-\tilde{s})}.
\end{equation}
In particular, in this subspace, the Hamiltonian can be written as 
\begin{equation}
H(\tilde{s})=
\begin{pmatrix}
\tilde{s}\left(1-\frac{1}{N}\right) & -\tilde{s}\frac{\sqrt{N-1}}{N}\\-\tilde{s}\frac{\sqrt{N-1}}{N}&\left(1-\tilde{s}\frac{N-1}{N}\right)\end{pmatrix}.
\end{equation}
Finding the eigenvalues and eigenvectors of a  symmetric $2\times 2$ matrix  is straightforward. Consequently, we have that the ground state is proportional to 
\begin{equation}
\left(1-2\tilde{s}\left(1-\frac{1}{N}\right) + \sqrt{1-4\tilde{s}(1-\tilde{s})\left(1-\frac{1}{N}\right)}\right)\ket{m}+2\tilde{s}\sqrt{\frac{1}{N}\left(1-\frac{1}{N}\right)}\ket\perp=:\alpha(\tilde{s})\ket{m}+\beta(\tilde{s})\ket\perp,
\end{equation}
where the coefficients $\alpha(\tilde{s})$, $\beta(\tilde{s})$ are defined implicitly. Straightforward algebra gives a normalized ground state 
\begin{align}\label{eq:state}
\ket{\lambda(s)} &= \frac{\left(1-2\tilde{s}\left(1-\frac{1}{N}\right)+\sqrt{1-4\tilde{s}(1-\tilde{s})\left[1-\frac{1}{N}\right]}\right)\ket m + 2\tilde{s}\sqrt{\frac{1}{N}\left(1-\frac{1}{N}\right)}\ket\perp}{\sqrt{2\left[1-4\tilde{s}(1-\tilde{s})\left(1-\frac{1}{N}\right)+\left(1-2\left[1-\frac{1}{N}\right]\tilde{s}\right)\sqrt{1-4\tilde{s}(1-\tilde{s})\left[1-\frac{1}{N}\right]}\right]}} \nonumber \\
&= \frac{\left(1-2\tilde{s}\left(1-\frac{1}{N}\right)+g(\tilde{s})\right)\ket m + 2\tilde{s}\sqrt{\frac{1}{N}\left(1-\frac{1}{N}\right)}\ket\perp}{\sqrt{2g(\tilde{s})\left[g(\tilde{s})+\left(1-2\left[1-\frac{1}{N}\right]\tilde{s}\right)\right]}} \nonumber \\
&= \frac{\left(1-2\tilde{s}\left(1-\frac{1}{N}\right)+g(\tilde{s})\right)\ket m + 2\tilde{s}\sqrt{\frac{1}{N}\left(1-\frac{1}{N}\right)}\ket\perp}{\mathcal{N}(\tilde{s})}
\end{align}
where $\mathcal{N}(\tilde{s})=\sqrt{\alpha(\tilde{s})^2+\beta(\tilde{s})^2}$ is the implicitly defined normalization factor on the state and $g(\tilde{s})$ is eigenvalue gap given in \cref{eq:gapsup}. 
Now
\begin{equation}
\bra{\lambda(s_-)}\lambda(s_+)\rangle=\frac{\alpha(\tilde{s}_-)\alpha(\tilde{s}_+)+\beta(\tilde{s}_-)\beta(\tilde{s}_+)}{\mathcal{N}(\tilde{s}_-)\mathcal{N}(\tilde{s}_+)}=\frac{\alpha(\tilde{s}_-)\alpha(1-\tilde{s}_-)+\beta(\tilde{s}_-)\beta(1-\tilde{s}_-)}{\mathcal{N}(\tilde{s}_-)\mathcal{N}(\tilde{s}_+)}.
\end{equation}
With a bit of algebra one obtains
\begin{align}
    \alpha(\tilde{s})\alpha(1-\tilde{s})&= 
    -4\tilde{s}(1-\tilde{s})\frac{1}{N}\left(1-\frac{1}{N}\right) +\frac{2}{N}\left(1+g(\tilde{s})\right),
\end{align}
and
\begin{equation}
\beta(\tilde{s})\beta(1-\tilde{s}) = 4\tilde{s}(1-\tilde{s})\frac{1}{N}\left(1-\frac{1}{N}\right),
\end{equation}
and, thus, 
\begin{align}
    \alpha(\tilde{s})\alpha(1-\tilde{s})+\beta(\tilde{s})\beta(1-\tilde{s})= \frac{2}{N} \left(1+g(\tilde{s})\right)
\end{align}

Using that $\tilde{s}=\tilde{s}_- = \frac{1-\sqrt{\frac{N^\alpha-1}{N-1}}}{2}$ and $g(\tilde{s}_-)=N^{\frac{\alpha-1}{2}}$, as in the main text, we have that 
\begin{align}
    \mathcal{N}^2(\tilde{s}_-) &= 2N^{\frac{\alpha-1}{2}}\left[1-\left(1-\frac{1}{N}\right)\left(1-\sqrt{\frac{N^\alpha-1}{N-1}}\right)+N^{\frac{\alpha-1}{2}}\right] \nonumber\\
    &= 2N^{\frac{\alpha-1}{2}}\left[1-\left(1-\frac{1}{N}-\frac{1}{N}\sqrt{(N-1)(N^\alpha-1)}\right)+N^{\frac{\alpha-1}{2}}\right] \nonumber\\
    &= 2N^{\frac{\alpha-1}{2}}\left[N^{\frac{\alpha-1}{2}} + \frac{1}{N}\left(1+\sqrt{N^{\alpha+1}-N-N^\alpha+1}\right)\right]\nonumber \\
    &= \Omega\left(N^{\alpha-1}\right)
\end{align}
Similarly, we have that
\begin{align}
    \mathcal{N}^2(\tilde{s}_+) &= 2N^{\frac{\alpha-1}{2}}\left[1-\left(1-\frac{1}{N}\right)\left(1+\sqrt{\frac{N^\alpha-1}{N-1}}\right)+N^{\frac{\alpha-1}{2}}\right]\nonumber\\
    &= 2N^{\frac{\alpha-1}{2}}\left[1-1+\frac{1}{N}-\frac{1}{N}\sqrt{(N-1)(N^\alpha-1)}+N^{\frac{\alpha-1}{2}}\right]\nonumber\\
    &= 2N^{\alpha-1}\left[1-\sqrt{1-N^{-1}-N^{-\alpha}+N^{-(\alpha+1)}}\right]\nonumber\\
    &= 2N^{\alpha-1}\left[1-\left(1+\frac{-N^{-1}-N^{-\alpha}+N^{-\alpha-1}}{2}+\mathcal{O}\left(N^{-2\alpha}\right)\right)\right]\nonumber\\
    &= N^{\alpha-1}\left[N^{-\alpha}+N^{-1}+\mathcal{O}(N^{-2\alpha})\right]\nonumber\\
    &= \Omega(N^{-1}).
\end{align}
Thus, $\mathcal{N}(\tilde{s}_-)\mathcal{N}(\tilde{s}_+)=\Omega\left(N^{\frac{\alpha}{2}-1}\right).$ Thus, 
\begin{equation}
\langle\lambda(s_-)\vert \lambda(s_+)\rangle=\frac{2\left(1+N^{\frac{\alpha-1}{2}}\right)}{N\Omega\left(N^{\frac{\alpha}{2}-1}\right)}= \frac{\mathcal{O}(1)}{\Omega(N^{\alpha/2})}=\mathcal{O}\left(N^{-\alpha/2}\right)=o(1).
\end{equation}

\end{document}